\documentclass[11pt]{article}

\usepackage[margin=1in]{geometry}
\usepackage{amsmath,amssymb,amsfonts,amsthm,mathtools}
\usepackage{graphicx}
\usepackage{booktabs}
\usepackage{float}
\usepackage{hyperref}
\usepackage{enumitem}
\usepackage{algorithm}
\usepackage{algpseudocode}
\usepackage{algorithmicx}
\usepackage{setspace}
\usepackage{natbib}
\usepackage{caption}
\usepackage{subcaption}
\usepackage{xcolor}

\newtheorem{definition}{Definition}[section]
\newtheorem{assumption}{Assumption}[section]
\newtheorem{lemma}{Lemma}[section]
\newtheorem{proposition}{Proposition}[section]

\newtheorem{remark}{Remark}[section]

\begin{document}
 \begin{center}
\textbf{\Large Pricing Excess-of-Loss Reinsurance and CAT Bonds under Climate Uncertainty:\\
A Cox Process Framework with Temperature-Dependent Stochastic Intensity}

 N. Karimi $^{a}$,  F. Shokrollahi\footnote{Corresponding Author, \\ Email: foad.shokrollahi@uwasa.fi}
\end{center}

\begin{center}

\emph{\footnotesize $^{a}$ Department of Applied Mathematics, Faculty of Mathematics and Computer Science} \\
\emph{\footnotesize Amirkabir University of Technology, No. 424, Hafez Ave.,15914, Tehran, Iran}\\
\emph{\footnotesize$^{1}$  Department of Mathematics and Statistics, University of Vaasa, Vaasa, Finland}\\
\end{center}

\begin{abstract}
This paper develops a climate-aware pricing framework for excess-of-loss (XL) reinsurance contracts and catastrophe (CAT) bonds under non-stationary catastrophe risk. Catastrophe arrivals are modeled as a Cox process whose stochastic intensity depends exponentially on a temperature-related climate index. To represent climate dynamics, the index is modeled as a mean-reverting Ornstein--Uhlenbeck process around a time-dependent warming trend.

Within this setting, aggregate losses follow a compound Cox structure with lognormal severities. Pricing is performed under a reduced-form risk-adjusted measure, which provides a tractable valuation approach for XL reinsurance layers and binary zero-coupon CAT bond payoffs in an incomplete market setting. Because catastrophe losses are not dynamically replicable, the framework emphasizes scenario-based valuation rather than model-independent no-arbitrage bounds.

A Monte Carlo valuation scheme is implemented to quantify the economic implications of climate-dependent catastrophe intensity. The numerical results show that climate dependence materially changes the loss-generation mechanism and affects the valuation of catastrophe-linked contracts. In the baseline calibration, the climate-aware model increases the excess-of-loss reinsurance premium and lowers the CAT bond price relative to the stationary benchmark. Furthermore, our analysis of the 99.5\% Tail Value-at-Risk (TVaR) indicates that stationary benchmarks may underestimate economic capital requirements by approximately 13.7\% compared to the climate-aware framework, highlighting the potential regulatory relevance of the proposed model. This finding highlights that benchmark design is critical for interpreting climate-pricing effects.

The framework is methodological in emphasis. It provides a transparent stochastic architecture that can be extended through richer empirical calibration, alternative risk-adjustment mechanisms, market-implied parameter estimation, and climate-dependent severity effects.
\end{abstract}

\noindent\textbf{Keywords:} climate risk; catastrophe bonds; excess-of-loss reinsurance; Cox process; stochastic intensity; compound Poisson process; non-stationarity; Monte Carlo pricing.

\noindent\textbf{JEL Classification:} G13, G22, C63, Q54, C15.

\section{Introduction}

The escalating frequency and severity of extreme weather events have positioned climate change as a systemic threat to the global economy and the insurance sector. According to the Intergovernmental Panel on Climate Change (IPCC), human-induced climate change is already affecting many weather and climate extremes across the globe \citep{ipcc2021}. For the insurance and reinsurance industries, which act as primary shock absorbers for financial losses caused by natural catastrophes, this shift implies that historical loss data may no longer provide a reliable guide to future risks \citep{SwissRe2023}. Traditional actuarial frameworks often rely on stationarity, treating the frequency and severity of catastrophic events as stable over time. However, as argued by \citet{milly2008stationarity}, this assumption becomes increasingly questionable in the presence of persistent environmental change.

Despite the growing consensus on physical climate risks, their integration into financial pricing models, particularly for insurance-linked securities (ILS) such as catastrophe (CAT) bonds and reinsurance treaties, remains limited. Existing approaches often treat climate change as a static parameter shift or rely on scenario analysis without embedding stochastic climate dynamics directly into the pricing engine \citep{Botzen2021}. This disconnect creates a gap between the physical models used in climate science and the financial models used by insurers, reinsurers, and capital-market participants. In particular, a homogeneous Poisson process with constant intensity cannot capture the compounding effect of gradual warming on catastrophe occurrence probabilities.

This paper bridges this gap by proposing a tractable stochastic framework that explicitly incorporates climate dynamics into catastrophe arrival intensity. Event arrivals are modeled using a Cox, or doubly stochastic Poisson, process whose intensity is driven by a temperature-related climate index. The climate index is modeled as a Gaussian mean-reverting process around a deterministic time-dependent mean, thereby allowing the process to capture fluctuations around a warming trend while preserving analytical tractability. The catastrophe intensity is specified as an exponential function of the climate index, capturing the nonlinear relationship between climate conditions and extreme-event frequency \citep{katz2013statistical}.

The contribution of this paper is fourfold. First, we introduce a climate-dependent intensity model that relaxes the stationarity assumption while retaining a tractable probabilistic structure. By deriving analytical expressions for the moments of the integrated intensity, we show how climate volatility and the warming trend jointly affect expected event counts. Second, we address market incompleteness by formulating valuation under a reduced-form risk-adjusted pricing measure and by reporting scenario-based valuation ranges rather than claiming unique arbitrage-free prices. Third, we provide empirical grounding by calibrating the climate dynamics to historical global temperature anomaly data from NASA GISTEMP. Importantly, the climate dynamics parameters are separated from catastrophe-frequency parameters, which require catastrophe event data, catastrophe-model output, market quotes, or scenario assumptions. Fourth, we analyze benchmark design by comparing the non-stationary climate-dependent model with a stationary benchmark matched to the same initial catastrophe intensity level, thereby isolating the effect of warming trends and climate volatility.

Our numerical analysis, based on Monte Carlo simulation with exact discretization of the climate process, quantifies these effects for excess-of-loss reinsurance and stylized CAT bond payoffs. The results show that while short-term pricing impacts may be moderate under baseline calibrations, the divergence between stationary and non-stationary models becomes more pronounced for longer horizons, high-sensitivity scenarios, and tail-risk measures such as TVaR. These findings have implications for regulatory capital assessment, multi-year reinsurance design, and climate-aware insurance risk management.

The rest of the paper is organized as follows. Section~\ref{sec:lit} reviews the relevant literature on collective risk models, Cox processes, catastrophe-linked securities, climate risk modeling, and incomplete-market valuation. Section~\ref{sec:prob} introduces the probabilistic framework, including the climate index dynamics, the climate-dependent Cox process, and the analytical moments of the integrated intensity. Section~\ref{sec:risk} develops the reduced-form risk-adjusted pricing framework for incomplete markets, defines payoff structures for XL reinsurance and CAT bonds, and introduces scenario-based valuation ranges. Section~\ref{sec:mont} outlines the Monte Carlo simulation methodology used for numerical valuation. Section~\ref{sec:numerical} presents the numerical and empirical analysis, including calibration, distributional results, convergence checks, sensitivity testing, long-term pricing dynamics, and tail-risk assessment. Finally, Section~\ref{sec:con} concludes the paper and discusses implications for climate-aware insurance risk management.

\section{Literature Review}
\label{sec:lit}
The paper is related to several strands of literature, connecting actuarial science, financial mathematics, climate science, and risk management.

The actuarial literature on collective risk models and compound Poisson processes provides the classical foundation for aggregate claims modeling. Standard references include \citet{asmussen2000ruin}, \citet{klugman2012loss}, and \citet{grandell1997mixed}. In the classical model, claim arrivals are often assumed to follow a homogeneous Poisson process. While analytically convenient, this assumption is restrictive when exposures or environmental conditions evolve over time. To address unobserved heterogeneity, mixed Poisson and Cox process models have been developed, which allow flexible intensity modulation. General treatments of point processes and stochastic intensities can be found in \citet{daley2003introduction} and \citet{bremaud1981point}, providing the theoretical basis for the stochastic intensity framework used in this study.

The paper builds on the theory of Cox processes and stochastic intensity models. Cox processes were introduced by \citet{cox1955some} and have since become central in credit risk, insurance, reliability, and point process modeling. General treatments of point processes and stochastic intensities can be found in \citet{daley2003introduction}, \citet{daley2008introduction}, and \citet{bremaud1981point}. In insurance, mixed Poisson and Cox process models provide flexible alternatives to homogeneous Poisson assumptions, especially when unobserved heterogeneity or time-varying risk factors matter.

Third, the paper is connected to catastrophe risk and insurance-linked securities pricing. CAT bonds transfer catastrophe risk from insurers and reinsurers to capital markets. Early and influential contributions include \citet{cummins2008cat}, \citet{doherty1997financial}, \citet{froot2001market}, and \citet{lane2000pricing}. These studies emphasize that CAT bond pricing must account for loss triggers, basis risk, tail dependence, illiquidity, and risk premia.

Fourth, the paper relates to climate-change economics and climate risk modeling. The non-stationarity of climate-related hazards has been widely discussed in climate science and risk management; see, for example, \citet{ipcc2021}, \citet{katz2013statistical}, and \citet{milly2008stationarity}. Recent comprehensive reviews by \citet{ingels2024art} and \citet{botzen2024handbook} highlight the growing sophistication of climate risk modeling in insurance, emphasizing the need to move beyond static assumptions. \citet{courbage2022extreme} further underscore the critical role of insurance mechanisms in managing extreme events under changing climate conditions. These studies motivate the use of non-stationary hazard models in insurance pricing, where reliance on historical event sets and stationary frequency assumptions may understate future risk.

In addition to the foundational works cited above, recent empirical and theoretical developments have expanded the understanding of climate risk pricing and market dynamics. In the CAT bond market, studies such as \citet{gatzert2021climate} document the evolving role of catastrophe-linked securities as a diversification tool, while also highlighting liquidity premiums that affect pricing models in incomplete markets.

Methodologically, non-stationary point processes, such as Hawkes processes, have increasingly been applied to model temporal clustering in extreme events (e.g., wildfires and hurricanes) \citet{Chavez-Demoulin2022}. These models challenge the independence assumption of standard Cox processes and suggest that ignoring clustering can lead to tail-risk underestimation. Although our framework retains the Cox-process assumption for tractability, we acknowledge these developments as an important avenue for future extensions.

Furthermore, the intersection of physical climate risk and financial stability has become a central theme. Research on "stranded assets" \citet{battiston2017climate} and "climate value at risk" provides mechanisms for assessing the impact of transition risks. \citet{basten2024who} examine how financial institutions monitor and incorporate climate risk, linking physical catastrophe risks to broader financial stability metrics, a concern also central to our regulatory capital assessment. Our scenario-based valuation framework aligns with these regulatory stress-testing approaches (e.g., by adverse scenario selection), ensuring our pricing reflects both physical and financial dimensions of climate risk.

The landscape of climate risk insurance is rapidly evolving, as documented by recent bibliometric analyses \citet{horani2025climate}. Research by \citet{singh2024weather} explores the viability of weather index insurance as an alternative mechanism for transferring climate risk, complementing capital market instruments like CAT bonds. While our paper focuses on CAT bonds and reinsurance, the broader discourse on index insurance and agricultural strategies \citet{madaki2023agricultural} underscores the diverse adaptation pathways being explored globally. \citet{garrido2024climate} provide further insights into the multifaceted impact of climate risk on the insurance sector, reinforcing the necessity of robust, dynamic pricing models.

Fifth, the paper is related to incomplete-market valuation and actuarial premium principles. Since catastrophe risks are only partially hedgeable, there is generally no unique equivalent martingale measure. Pricing therefore often relies on risk-adjusted measures, distortion principles, Wang transforms, equilibrium-based pricing, or reduced-form pricing kernels; see \citet{wang2000class}, \citet{gerber1994option}, and \citet{buhlmann1980economic}. In the present setting, lognormal severities make a positive loss-level Esscher transform problematic because the moment generating function of a lognormal random variable is infinite for positive arguments. This motivates the use of a reduced-form risk-adjusted pricing measure acting on the climate and loss-generation mechanism.

Overall, the literature supports the methodological choices made in this paper. The combination of stochastic climate dynamics (Section~\ref{sec:prob}) and reduced-form pricing (Section~\ref{sec:risk}) provides a comprehensive framework for pricing climate-exposed insurance-linked instruments, bridging the gap between environmental science and financial risk assessment.

\section{Probabilistic Model}\label{sec:prob}

\subsection{Filtered Probability Space}
Let $(\Omega,\mathcal{F},\mathbb{F},\mathbb{P})$ be a complete filtered probability space, where $\mathbb{F}=\{\mathcal{F}_t\}_{t\geq 0}$ satisfies the usual conditions. The probability measure $\mathbb{P}$ represents the physical measure. Pricing will be performed under a risk-adjusted measure $\mathbb{Q}$ introduced later.

\subsection{Climate Index Dynamics}

\begin{definition}[Climate index]
The process $X=\{X_t\}_{t\in[0,T]}$ denotes a temperature-related climate index, such as a global or regional temperature anomaly. Positive values represent warmer-than-baseline states.
\end{definition}

\begin{assumption}[Mean-reverting climate dynamics around a warming trend]
Under the physical measure $\mathbb{P}$, the climate index follows the Gaussian mean-reverting dynamics
\begin{equation}
    dX_t=\kappa\big(\theta(t)-X_t\big)dt+\sigma_X dW_t,
    \qquad X_0=x_0,
    \label{eq:climateP}
\end{equation}
where $\kappa>0$ is the speed of mean reversion, $\sigma_X>0$ is the volatility of the climate index, and $\theta(t)$ is a deterministic time-dependent mean level. In the baseline specification,
\begin{equation}
    \theta(t)=\theta_0+\theta_1 t,
    \label{eq:lineartrend}
\end{equation}
where $\theta_1$ captures the deterministic warming trend.
\end{assumption}

\begin{remark}
When $\theta(t)$ is constant, equation \eqref{eq:climateP} reduces to the standard Ornstein--Uhlenbeck process. When $\theta(t)$ is time-dependent, the process remains Gaussian and mean-reverting, but it is no longer stationary. Hence, in this paper the term ``OU-type process'' refers to a Gaussian mean-reverting process around a deterministic moving mean.
\end{remark}

The explicit solution of \eqref{eq:climateP} is
\begin{equation}
    X_t
    =
    x_0 e^{-\kappa t}
    +
    \kappa \int_0^t e^{-\kappa(t-s)}\theta(s)\,ds
    +
    \sigma_X \int_0^t e^{-\kappa(t-s)}\,dW_s.
    \label{eq:Xsolution}
\end{equation}

For the linear trend specification $\theta(t)=\theta_0+\theta_1 t$, the mean is
\begin{equation}
    m_X(t):=\mathbb{E}^{\mathbb{P}}[X_t]
    =
    x_0 e^{-\kappa t}
    +
    \theta_0(1-e^{-\kappa t})
    +
    \theta_1
    \left[
        t-\frac{1-e^{-\kappa t}}{\kappa}
    \right],
    \label{eq:meanX}
\end{equation}
and the variance is
\begin{equation}
    v_X(t):=\operatorname{Var}^{\mathbb{P}}(X_t)
    =
    \frac{\sigma_X^2}{2\kappa}
    \left(1-e^{-2\kappa t}\right).
    \label{eq:varX}
\end{equation}

The covariance function is
\begin{equation}
    c_X(s,t):=\operatorname{Cov}^{\mathbb{P}}(X_s,X_t)
    =
    \frac{\sigma_X^2}{2\kappa}
    \left(
        e^{-\kappa |t-s|}
        -
        e^{-\kappa(t+s)}
    \right),
    \qquad s,t\geq 0,
    \label{eq:covX}
\end{equation}
when the initial value $X_0=x_0$ is deterministic.

\subsection{Climate-Dependent Cox Process}

\begin{definition}[Cox process]
A counting process $N=\{N_t\}_{t\in[0,T]}$ is a Cox process with stochastic intensity $\lambda=\{\lambda_t\}_{t\in[0,T]}$ if, conditional on the path of $\lambda$, $N$ is an inhomogeneous Poisson process satisfying
\begin{equation}
    N_T \mid \mathcal{F}_T^\lambda \sim \text{Poisson}\left(\int_0^T \lambda_s ds\right).
\end{equation}
\end{definition}

\begin{assumption}[Temperature-dependent stochastic intensity]
The catastrophe arrival intensity is
\begin{equation}
    \lambda_t = \lambda_0 \exp(\beta X_t), \qquad \lambda_0>0,\quad \beta\geq 0.
    \label{eq:intensity}
\end{equation}
The exponential specification implies that catastrophe arrival intensity responds nonlinearly to changes in the climate index. Even moderate increases in the climate state variable may generate disproportionately large increases in expected event frequency.

Figure~\ref{fig:intensity_sensitivity} illustrates the functional relationship between the climate index and the catastrophe arrival intensity. The solid lines represent different climate-sensitivity scenarios ($\beta$). The shaded area highlights the substantial uncertainty in catastrophe intensity arising from parameter misspecification, demonstrating the convex amplification of risk as the climate index increases.
\begin{figure}[H]
\centering
\includegraphics[width=0.85\textwidth]{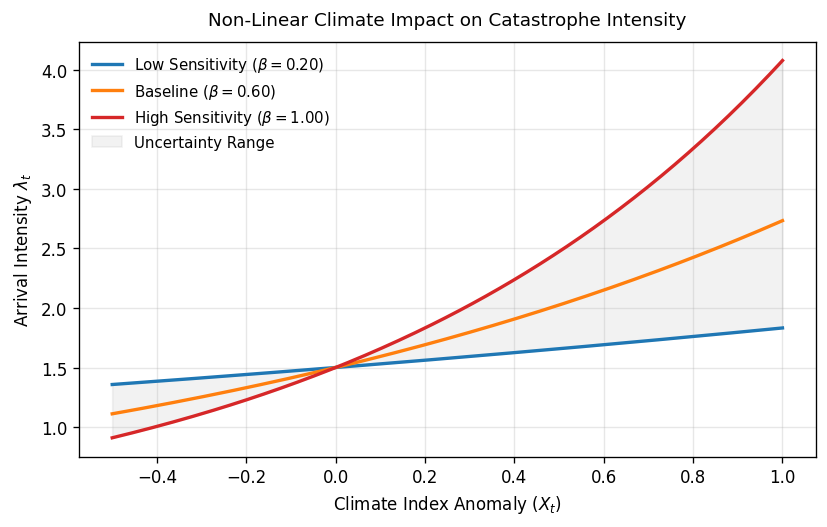}
\caption{Non-linear climate impact on catastrophe intensity across sensitivity scenarios. The shaded region represents the range of possible intensities given uncertainty in the climate-sensitivity parameter $\beta$.}
\label{fig:intensity_sensitivity}
\end{figure}

The integrated intensity is
\begin{equation}
    \Lambda_T = \int_0^T \lambda_0 \exp(\beta X_t)dt.
    \label{eq:Lambda}
\end{equation}
Conditional on the climate path,
\begin{equation}
    N_T\mid \mathcal{F}_T^X \sim \text{Poisson}(\Lambda_T).
    \label{eq:conditionalPoisson}
\end{equation}
\end{assumption}

\subsection{Severity and Aggregate Loss}

\begin{assumption}[Claim severities]
Individual claim severities $\{Y_i\}_{i\geq1}$ are independent and identically distributed, independent of $N$ conditional on the climate path, and follow a lognormal distribution:
\begin{equation}
    Y_i\sim \mathrm{Lognormal}(\mu_Y,\sigma_Y^2).
\end{equation}
Thus,
\begin{equation}
    m_1:=\mathbb{E}[Y_i]=\exp\left(\mu_Y+\frac12\sigma_Y^2\right),
\end{equation}
and
\begin{equation}
    m_2:=\mathbb{E}[Y_i^2]=\exp\left(2\mu_Y+2\sigma_Y^2\right).
\end{equation}
\end{assumption}

A natural extension would allow severity to depend on climate conditions as well. For example, one may specify
\begin{equation}
    \log Y_i \mid X_{T_i}\sim \mathcal{N}(\mu_0+\gamma X_{T_i},\sigma_Y^2),
\end{equation}
so that adverse climate states affect both claim frequency and claim size. In the present paper, however, we retain a climate-independent severity specification in order to isolate the frequency channel and keep the proof-of-concept framework analytically transparent.
\subsection{Analytical Moments of the Integrated Intensity}
Recall that the integrated intensity $\Lambda_T$ is defined in \eqref{eq:Lambda}. Since $X_t$ is Gaussian for each $t$, the exponential moment
$\mathbb{E}^{\mathbb{P}}[\exp(\beta X_t)]$ exists for all finite $\beta$ and finite $t$. Therefore, by Fubini's theorem,
\begin{equation}
    \mathbb{E}^{\mathbb{P}}[\Lambda_T]
    =
    \lambda_0
    \int_0^T
    \mathbb{E}^{\mathbb{P}}\left[
        \exp(\beta X_t)
    \right]dt.
\end{equation}

\begin{proposition}[Expected integrated intensity]
Let $X_t$ follow \eqref{eq:climateP} with deterministic initial value $X_0=x_0$ and deterministic mean function $\theta(t)$. Then
\begin{equation}
    \mathbb{E}^{\mathbb{P}}[\Lambda_T]
    =
    \lambda_0
    \int_0^T
    \exp\left(
        \beta m_X(t)
        +
        \frac{1}{2}\beta^2 v_X(t)
    \right)dt,
    \label{eq:ELambdaCorrect}
\end{equation}
where
\begin{equation}
    m_X(t)=\mathbb{E}^{\mathbb{P}}[X_t],
    \qquad
    v_X(t)=\operatorname{Var}^{\mathbb{P}}(X_t).
\end{equation}
For the linear trend $\theta(t)=\theta_0+\theta_1t$, $m_X(t)$ and $v_X(t)$ are given by \eqref{eq:meanX} and \eqref{eq:varX}.
\end{proposition}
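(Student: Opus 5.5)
The plan is to combine three facts: the Fubini identity displayed just before the statement, the Gaussianity of $X_t$ for each fixed $t$, and the lognormal moment formula $\mathbb{E}[e^{Z}]=e^{\mu+\sigma^2/2}$ for $Z\sim\mathcal{N}(\mu,\sigma^2)$.

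First, I will establish that each $X_t$ is Gaussian with mean $m_X(t)$ and variance $v_X(t)$. This follows from the explicit solution \eqref{eq:Xsolution}. The first two terms there are deterministic, because $x_0$ and $\theta(\cdot)$ are deterministic. The stochastic term $\sigma_X\int_0^t e^{-\kappa(t-s)}\,dW_s$ is a Wiener integral of a deterministic square-integrable integrand, so it is centered Gaussian. By the It\^o isometry, its variance is
\begin{equation*}
\sigma_X^2\int_0^t e^{-2\kappa(t-s)}\,ds=\frac{\sigma_X^2}{2\kappa}\left(1-e^{-2\kappa t}\right),
\end{equation*}
which matches \eqref{eq:varX}. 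The mean is $x_0e^{-\kappa t}+\kappa\int_0^t e^{-\kappa(t-s)}\theta(s)\,ds$. For $\theta(s)=\theta_0+\theta_1 s$, an elementary integration by parts of $\kappa\int_0^t e^{-\kappa(t-s)}s\,ds$ produces \eqref{eq:meanX}.

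Second, for fixed $t$, set $Z=\beta X_t\sim\mathcal{N}(\beta m_X(t),\beta^2 v_X(t))$. Completing the square in the Gaussian integral gives
\begin{equation*}
\mathbb{E}^{\mathbb{P}}[e^{\beta X_t}]=\exp\left(\beta m_X(t)+\tfrac12\beta^2 v_X(t)\right).
\end{equation*}

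Third, I will justify exchanging expectation and time integral via Tonelli's theorem. The integrand $\lambda_0 e^{\beta X_t(\omega)}$ is nonnegative. It is jointly measurable in $(t,\omega)$, since $X$ has continuous paths, hence is progressively measurable. Tonelli therefore gives $\mathbb{E}[\Lambda_T]=\lambda_0\int_0^T\mathbb{E}[e^{\beta X_t}]\,dt$ with no integrability precondition. Substituting the second step yields \eqref{eq:ELambdaCorrect}.

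Finally, I will note that the result is finite. The map $t\mapsto\beta m_X(t)+\tfrac12\beta^2v_X(t)$ is continuous on $[0,T]$, so the integrand is bounded and $\mathbb{E}[\Lambda_T]<\infty$.

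The only step needing any care is the measurability/Tonelli justification together with the Gaussianity of the stochastic integral. Both are standard, so I do not expect a genuine obstacle. The remaining work is the routine computation of $m_X(t)$ for the linear trend.
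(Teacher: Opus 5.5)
Your proposal is correct and follows the paper's proof: compute the Gaussian moment generating function $\mathbb{E}^{\mathbb{P}}[e^{\beta X_t}]=\exp\bigl(\beta m_X(t)+\tfrac12\beta^2 v_X(t)\bigr)$, then exchange expectation and time integral. Your extra steps (deriving Gaussianity and $m_X,v_X$ from the Wiener-integral form, using Tonelli on the nonnegative integrand, and checking finiteness) are correct and only make explicit what the paper leaves implicit.
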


\begin{proof}
Since $X_t$ is Gaussian with mean $m_X(t)$ and variance $v_X(t)$,
\begin{equation}
    \mathbb{E}^{\mathbb{P}}\left[e^{\beta X_t}\right]
    =
    \exp\left(
        \beta m_X(t)
        +
        \frac{1}{2}\beta^2 v_X(t)
    \right).
\end{equation}
Substituting this expression into the definition of $\Lambda_T$ and applying Fubini's theorem yields \eqref{eq:ELambdaCorrect}.
\end{proof}

\begin{remark}[Convexity adjustment]
The term $\frac{1}{2}\beta^2 v_X(t)$ in \eqref{eq:ELambdaCorrect} is a convexity adjustment induced by the exponential intensity specification. It implies that, even for a fixed mean climate trajectory, higher climate volatility increases expected catastrophe frequency.
\end{remark}
\begin{proposition}[Variance of the integrated intensity]
Let $X_t$ follow \eqref{eq:climateP} with deterministic initial value $X_0=x_0$. Then the variance of the integrated intensity $\Lambda_T$ is
\begin{equation}
    \operatorname{Var}^{\mathbb{P}}(\Lambda_T)
    =
    \lambda_0^2
    \int_0^T\int_0^T
    \left[
        \exp\left(
            \beta m_X(s)
            +
            \beta m_X(t)
            +
            \frac{1}{2}\beta^2 v_X(s)
            +
            \frac{1}{2}\beta^2 v_X(t)
            +
            \beta^2 c_X(s,t)
        \right)
    \right]ds\,dt
    -
    \left(
        \mathbb{E}^{\mathbb{P}}[\Lambda_T]
    \right)^2,
    \label{eq:VarLambdaCorrect}
\end{equation}
where
\[
    m_X(t)=\mathbb{E}^{\mathbb{P}}[X_t],
    \qquad
    v_X(t)=\operatorname{Var}^{\mathbb{P}}(X_t),
    \qquad
    c_X(s,t)=\operatorname{Cov}^{\mathbb{P}}(X_s,X_t).
\]
For the linear trend specification, these functions are given by
\eqref{eq:meanX}, \eqref{eq:varX}, and \eqref{eq:covX}.
\end{proposition}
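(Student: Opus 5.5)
We write $\operatorname{Var}^{\mathbb{P}}(\Lambda_T)=\mathbb{E}^{\mathbb{P}}[\Lambda_T^2]-(\mathbb{E}^{\mathbb{P}}[\Lambda_T])^2$. The second term is already available from the preceding proposition on the expected integrated intensity, so it suffices to compute the second moment. We expand the square of \eqref{eq:Lambda} as a double integral:
\[
\Lambda_T^2=\lambda_0^2\int_0^T\int_0^T \exp\big(\beta(X_s+X_t)\big)\,ds\,dt .
\]
The integrand is nonnegative and jointly measurable in $(s,t,\omega)$, since $X$ has continuous paths. Tonelli's theorem therefore lets us interchange expectation and the double integral, with no integrability assumption needed in advance:
\[
\mathbb{E}^{\mathbb{P}}[\Lambda_T^2]=\lambda_0^2\int_0^T\int_0^T \mathbb{E}^{\mathbb{P}}\big[e^{\beta(X_s+X_t)}\big]\,ds\,dt .
\]

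The key step is evaluating the inner expectation. By \eqref{eq:Xsolution}, with $x_0$ and $\theta$ deterministic, $X$ is a Gaussian process: each $X_u$ is a deterministic function plus a Wiener integral of a deterministic integrand. Hence $(X_s,X_t)$ is jointly Gaussian, and $X_s+X_t$ is a univariate Gaussian with
\[
\text{mean } m_X(s)+m_X(t), \qquad \text{variance } v_X(s)+v_X(t)+2c_X(s,t).
\]
For the covariance $c_X(s,t)$ we take the expression \eqref{eq:covX}. If we want to verify it, the It\^o isometry gives $\sigma_X^2\int_0^{s\wedge t}e^{-\kappa(s-u)}e^{-\kappa(t-u)}du$, which reproduces \eqref{eq:covX}. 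The lognormal moment formula $\mathbb{E}[e^{Z}]=e^{\mu+\sigma^2/2}$ for $Z\sim\mathcal N(\mu,\sigma^2)$, applied to $Z=\beta(X_s+X_t)$, then gives
\[
\mathbb{E}^{\mathbb{P}}\big[e^{\beta(X_s+X_t)}\big]=\exp\Big(\beta m_X(s)+\beta m_X(t)+\tfrac12\beta^2 v_X(s)+\tfrac12\beta^2 v_X(t)+\beta^2 c_X(s,t)\Big).
\]
This is exactly the integrand in \eqref{eq:VarLambdaCorrect}.

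We then have to confirm finiteness, so that the subtraction is legitimate. The functions $m_X$, $v_X$ and $c_X$ are continuous, hence bounded, on the compact set $[0,T]^2$. The integrand is therefore bounded, the double integral is finite, and $\Lambda_T\in L^2$. Substituting into the variance identity yields \eqref{eq:VarLambdaCorrect}. For the linear trend we simply plug in \eqref{eq:meanX}, \eqref{eq:varX} and \eqref{eq:covX}.

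The main point requiring care is the joint Gaussianity of $(X_s,X_t)$ and the resulting cross-term $2c_X(s,t)$; it is what produces the $\beta^2 c_X$ term rather than a product of marginal moments. Joint Gaussianity follows from \eqref{eq:Xsolution}, so everything else is routine.
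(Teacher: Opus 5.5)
Your proposal is correct and follows the same route as the paper. Like the paper, you write $\Lambda_T^2$ as a double integral, interchange expectation and integration, evaluate $\mathbb{E}^{\mathbb{P}}[e^{\beta(X_s+X_t)}]$ via joint Gaussianity of $(X_s,X_t)$, and subtract $(\mathbb{E}^{\mathbb{P}}[\Lambda_T])^2$; your Tonelli justification, It\^o-isometry check of $c_X(s,t)$, and boundedness argument giving $\Lambda_T\in L^2$ make explicit what the paper's proof leaves implicit.
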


\begin{proof}
By definition,
\begin{equation}
    \mathbb{E}^{\mathbb{P}}[\Lambda_T^2]
    =
    \lambda_0^2
    \int_0^T\int_0^T
    \mathbb{E}^{\mathbb{P}}
    \left[
        \exp\left(\beta X_s+\beta X_t\right)
    \right]ds\,dt.
\end{equation}
Since $(X_s,X_t)$ is jointly Gaussian,
\begin{equation}
    \mathbb{E}^{\mathbb{P}}
    \left[
        \exp\left(\beta X_s+\beta X_t\right)
    \right]
    =
    \exp\left(
        \beta m_X(s)
        +
        \beta m_X(t)
        +
        \frac{1}{2}\beta^2 v_X(s)
        +
        \frac{1}{2}\beta^2 v_X(t)
        +
        \beta^2 c_X(s,t)
    \right).
\end{equation}
Therefore,
\begin{equation}
    \operatorname{Var}^{\mathbb{P}}(\Lambda_T)
    =
    \mathbb{E}^{\mathbb{P}}[\Lambda_T^2]
    -
    \left(
        \mathbb{E}^{\mathbb{P}}[\Lambda_T]
    \right)^2,
\end{equation}
which gives \eqref{eq:VarLambdaCorrect}.
\end{proof}

\begin{definition}[Aggregate loss]
The aggregate catastrophe loss at maturity $T$ is
\begin{equation}
    S_T = \sum_{i=1}^{N_T}Y_i,
\end{equation}
with the convention that $S_T=0$ when $N_T=0$.
\end{definition}

\begin{lemma}[Conditional compound Poisson transform]
Conditional on $\mathcal{F}_T^X$, the Laplace transform of $S_T$ is
\begin{equation}
    \mathbb{E}\left[e^{-uS_T}\mid\mathcal{F}_T^X\right]
    =\exp\left\{\Lambda_T\left(\mathcal{L}_Y(u)-1\right)\right\},\qquad u\geq0,
\end{equation}
where $\mathcal{L}_Y(u)=\mathbb{E}[e^{-uY}]$ is the severity Laplace transform.
\end{lemma}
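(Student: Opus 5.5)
We condition first on the number of events and then on the climate path, reducing everything to the moment generating function of a Poisson random variable. Fix $u\ge 0$. Since $u\ge0$ and $Y_i>0$, we have $0<e^{-uS_T}\le 1$, so all conditional expectations below exist and are finite. In particular $\mathcal{L}_Y(u)\in(0,1]$, and no integrability problem arises. This contrasts with the positive-argument transform, which is infinite for lognormal severities.

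\emph{Step 1: conditioning on $N_T$.} By the tower property with respect to $\sigma(\mathcal{F}_T^X,N_T)$,
\begin{equation*}
\mathbb{E}\left[e^{-uS_T}\mid\mathcal{F}_T^X\right]=\mathbb{E}\left[\mathbb{E}\left[e^{-u\sum_{i=1}^{N_T}Y_i}\mid \mathcal{F}_T^X\vee\sigma(N_T)\right]\mid\mathcal{F}_T^X\right].
\end{equation*}
By the severity assumption, the $Y_i$ are i.i.d.\ and independent of $N$ conditionally on the climate path. Their law does not depend on the path. On $\{N_T=n\}$ the inner expectation therefore factorises as $\prod_{i=1}^n\mathbb{E}[e^{-uY_i}]=\mathcal{L}_Y(u)^n$, with the empty product equal to $1$ when $n=0$, consistent with the convention $S_T=0$. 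Hence the inner conditional expectation equals $\mathcal{L}_Y(u)^{N_T}$.

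\emph{Step 2: Poisson probability generating function.} By \eqref{eq:conditionalPoisson}, $N_T\mid\mathcal{F}_T^X\sim\text{Poisson}(\Lambda_T)$. Writing $z=\mathcal{L}_Y(u)$, we get
\begin{equation*}
\mathbb{E}\left[z^{N_T}\mid\mathcal{F}_T^X\right]=\sum_{n\ge0}z^n e^{-\Lambda_T}\frac{\Lambda_T^n}{n!}=\exp\{\Lambda_T(z-1)\}.
\end{equation*}
The series converges absolutely because $|z|\le1$ and $\Lambda_T<\infty$ almost surely. Finiteness of $\Lambda_T$ holds because $t\mapsto X_t$ is continuous, so the integrand in \eqref{eq:Lambda} is bounded on $[0,T]$. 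Combining the two steps gives the claimed formula.

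\emph{Main obstacle.} The only delicate point is justifying the factorisation in Step 1. One must make precise that conditional independence of $\{Y_i\}$ from $N$ given $\mathcal{F}_T^X$, together with the fact that the severities do not depend on the climate, allows the joint conditional law of $(N_T,Y_1,Y_2,\dots)$ to be handled via a regular conditional distribution. Concretely, I would write
\begin{equation*}
\mathbb{E}\left[e^{-uS_T}\mid\mathcal{F}_T^X\right]=\sum_{n\ge0}\mathbb{E}\left[\mathbf 1_{\{N_T=n\}}e^{-u\sum_{i\le n}Y_i}\mid\mathcal{F}_T^X\right].
\end{equation*}
Interchanging sum and conditional expectation is justified by monotone convergence, since all terms are nonnegative. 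Conditional independence then splits each summand into $\mathbb{P}(N_T=n\mid\mathcal{F}_T^X)\,\mathcal{L}_Y(u)^n$.
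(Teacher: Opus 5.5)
Your proof is correct and takes the same route as the paper: you condition on the number of events, factorise the severity transform into $\mathcal{L}_Y(u)^n$, and sum against the conditional Poisson weights to get $\exp\{\Lambda_T(\mathcal{L}_Y(u)-1)\}$. Your extra justifications fill in details the paper's one-line computation leaves implicit, namely the bound $0<e^{-uS_T}\le 1$, finiteness of $\Lambda_T$ from path continuity, the monotone-convergence interchange of sum and conditional expectation, and the fact that climate-independent, conditionally independent severities are independent of $\mathcal{F}_T^X\vee\sigma(N_T)$.
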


\begin{proof}
Given $\mathcal{F}_T^X$, $N_T$ is Poisson with mean $\Lambda_T$. Therefore,
\begin{equation}\begin{aligned}
\mathbb{E}[e^{-uS_T}\mid\mathcal{F}_T^X]
&=\sum_{n=0}^{\infty}\mathbb{E}\left[e^{-u\sum_{i=1}^nY_i}\right]
\frac{e^{-\Lambda_T}\Lambda_T^n}{n!}\\
&=\sum_{n=0}^{\infty}\left(\mathcal{L}_Y(u)\right)^n
\frac{e^{-\Lambda_T}\Lambda_T^n}{n!}\\
&=e^{-\Lambda_T}\exp\{\Lambda_T\mathcal{L}_Y(u)\}
=\exp\{\Lambda_T(\mathcal{L}_Y(u)-1)\}.
\end{aligned}\end{equation}
\end{proof}

\begin{proposition}[First moment of aggregate loss]
The conditional expectation of $S_T$ is
\begin{equation}
    \mathbb{E}[S_T\mid\mathcal{F}_T^X]=\Lambda_T m_1,
\end{equation}
and hence
\begin{equation}
    \mathbb{E}[S_T]=m_1\mathbb{E}[\Lambda_T].
\end{equation}
\end{proposition}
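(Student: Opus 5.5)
We compute the conditional mean directly, conditioning on $N_T$ given $\mathcal{F}_T^X$, and then take expectations via the tower property. Given $\mathcal{F}_T^X$, the count $N_T$ is Poisson with mean $\Lambda_T$ by \eqref{eq:conditionalPoisson}. By the severity assumption, the $Y_i$ are i.i.d.\ with mean $m_1$ and independent of $N$ conditionally on the climate path. Decomposing on the events $\{N_T=n\}$ and using nonnegativity of the summands, monotone convergence justifies the interchange of sum and conditional expectation:
\begin{equation*}
\mathbb{E}[S_T\mid\mathcal{F}_T^X]=\sum_{n=0}^\infty \mathbb{E}\Big[\sum_{i=1}^n Y_i\Big]\frac{e^{-\Lambda_T}\Lambda_T^n}{n!}=m_1\sum_{n=0}^\infty n\,\frac{e^{-\Lambda_T}\Lambda_T^n}{n!}=m_1\Lambda_T .
\end{equation*}
This is a conditional Wald identity.

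Alternatively, and in the spirit of the preceding Lemma, one could differentiate the conditional Laplace transform $\exp\{\Lambda_T(\mathcal{L}_Y(u)-1)\}$ at $u=0^+$. This gives $-\frac{d}{du}\big|_{u=0^+} = \Lambda_T\,(-\mathcal{L}_Y'(0^+)) = \Lambda_T m_1$, since $m_1<\infty$ for the lognormal distribution. Justifying this one-sided derivative requires a dominated or monotone convergence argument on $(1-e^{-uY})/u\uparrow Y$. The direct series route above avoids that, so I will use it as the main argument.

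For the unconditional statement, apply the tower property: $\mathbb{E}[S_T]=\mathbb{E}[\mathbb{E}[S_T\mid\mathcal{F}_T^X]]=m_1\,\mathbb{E}[\Lambda_T]$. This is legitimate for nonnegative random variables without any integrability precondition. Finiteness follows from the Proposition on the expected integrated intensity, which shows $\mathbb{E}^{\mathbb{P}}[\Lambda_T]<\infty$ via \eqref{eq:ELambdaCorrect}, with an integrand continuous on $[0,T]$.

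The only delicate point is measure-theoretic. We must make precise that $\{Y_i\}$ remains i.i.d.\ lognormal and independent of $N_T$ conditionally on $\mathcal{F}_T^X$, so that $\mathbb{E}[\sum_{i\le n}Y_i\mathbf{1}_{\{N_T=n\}}\mid\mathcal{F}_T^X]=n m_1\,\mathbb{P}(N_T=n\mid\mathcal{F}_T^X)$. I will take this directly from the severity assumption, which gives independence from both $N$ and the climate path. Everything else is routine.
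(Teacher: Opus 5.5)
Your proposal is correct and follows the same route as the paper, which simply cites the standard compound-Poisson conditional mean and the tower property. You spell out what the paper leaves implicit: conditioning on $\{N_T=n\}$, conditional monotone convergence, the Poisson mean, and independence of $\{Y_i\}$ from $(N,X)$, which the paper's stated climate-independent severity specification justifies.
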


\begin{proof}
The first conditional moment formula is standard for compound Poisson sums and the unconditional result follows from the tower property.
\end{proof}
\subsection{Extension: Climate-Dependent Severity}
\label{subsec:severity_extension}

While the baseline model assumes independence between severity and climate, empirical evidence suggests that extreme climate states may exacerbate the economic magnitude of catastrophes. We extend the model to allow for severity dependence.

Let the log-severity depend linearly on the climate index at the time of the event:
\begin{equation} \label{eq:dep_severity}
\log Y_i \mid X_{T_i} \sim \mathcal{N}(\mu_Y + \gamma X_{T_i}, \sigma_Y^2),
\end{equation}
where $\gamma$ captures the climate elasticity of claim severity. If $\gamma > 0$, warmer climates lead to proportionally larger losses.

Under this specification, the conditional expectation of aggregate losses becomes:
\begin{equation} \label{eq:cond_exp_sev}
\mathbb{E}[S_T \mid \mathcal{F}_T^X] = \int_0^T \lambda_0 e^{\beta X_s} \cdot \exp\left( \mu_Y + \gamma X_s + \frac{1}{2}\sigma_Y^2 \right) ds = e^{\mu_Y + \frac{1}{2}\sigma_Y^2} \lambda_0 \int_0^T \exp\left( (\beta + \gamma) X_s \right) ds.
\end{equation}

\begin{proposition}[Effective climate sensitivity under combined frequency--severity dependence]
Let the severity follow \eqref{eq:dep_severity} with $\gamma \geq 0$. Then:
\begin{enumerate}[label=(\roman*)]
  \item The conditional expectation of aggregate losses satisfies
  \[
  \mathbb{E}[S_T \mid \mathcal{F}^X_T] = e^{\mu_Y + \frac{1}{2}\sigma_Y^2} \cdot
  \lambda_0 \int_0^T \exp\bigl((\beta + \gamma)X_s\bigr)\,ds,
  \]
  so the effective climate-sensitivity parameter governing aggregate expected losses is
  $\beta_{\text{eff}} = \beta + \gamma$.
  \item The unconditional expected aggregate loss is
  \[
  \mathbb{E}[S_T] = e^{\mu_Y + \frac{1}{2}\sigma_Y^2} \cdot \lambda_0
  \int_0^T \exp\Bigl((\beta + \gamma)m_X(t) + \tfrac{1}{2}(\beta+\gamma)^2 v_X(t)\Bigr)\,dt,
  \]
  where $m_X(t)$ and $v_X(t)$ are given by Equations~(4) and~(5).
  \item The combined effect of frequency and severity climate dependence is super-additive:
  the pricing effect of joint dependence ($\gamma > 0$) exceeds the sum of the individual
  frequency effect ($\beta > 0$, $\gamma = 0$) and severity effect ($\beta = 0$, $\gamma > 0$)
  taken separately, due to the convexity of the exponential function.
\end{enumerate}
\end{proposition}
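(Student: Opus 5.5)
We prove the three parts in order. Parts (i)--(ii) come from conditioning on the climate path and reusing the Gaussian computation of the proposition on the expected integrated intensity. For part (iii) we first fix a precise meaning: the ``pricing effect'' of a specification is the increase in expected aggregate loss relative to the climate-neutral benchmark $\beta=\gamma=0$. We then prove a pathwise inequality, which does not need Gaussianity at all.

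For (i), we condition on $\mathcal{F}_T^X$. Given the climate path, $N$ is an inhomogeneous Poisson process with intensity $\lambda_s=\lambda_0e^{\beta X_s}$. By \eqref{eq:dep_severity}, each mark $Y_i$ depends only on $X_{T_i}$, and the marks are conditionally independent. The conditional mean of a mark is
\[
h(s):=\mathbb{E}[Y_i\mid X_{T_i}=X_s]=\exp\bigl(\mu_Y+\gamma X_s+\tfrac12\sigma_Y^2\bigr),
\]
by the lognormal mean. Campbell's formula for marked Poisson processes (equivalently, the compensator identity $\mathbb{E}[\sum_{i\le N_T} h(T_i)\mid\mathcal{F}_T^X]=\int_0^T h(s)\lambda_s\,ds$) then gives
\[
\mathbb{E}[S_T\mid\mathcal{F}_T^X]=\int_0^T \lambda_0 e^{\beta X_s}\,h(s)\,ds,
\]
which is \eqref{eq:cond_exp_sev}. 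Collecting exponents shows that aggregate expected losses depend on the climate only through $(\beta+\gamma)X_s$, so $\beta_{\mathrm{eff}}=\beta+\gamma$.

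For (ii), we take expectations and apply the tower property. Interchanging expectation and the time integral is justified by Fubini--Tonelli, since the integrand is nonnegative. Then $\mathbb{E}[e^{(\beta+\gamma)X_t}]=\exp\bigl((\beta+\gamma)m_X(t)+\tfrac12(\beta+\gamma)^2v_X(t)\bigr)$ because $X_t$ is Gaussian. This is exactly the argument of the proposition on the expected integrated intensity with $\beta$ replaced by $\beta+\gamma$, and $m_X,v_X$ are taken from \eqref{eq:meanX}--\eqref{eq:varX}.

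For (iii), we write $c:=\lambda_0e^{\mu_Y+\sigma_Y^2/2}$. With this notation,
\[
\mathbb{E}[S_T\mid\mathcal{F}_T^X;\,b]=c\int_0^T e^{bX_s}\,ds
\]
for the effective sensitivity $b$. The joint specification uses $b=\beta+\gamma$. The frequency-only specification uses $b=\beta$, and the severity-only specification uses $b=\gamma$; in the latter case the intensity is the constant $\lambda_0$, and the same formula applies. The benchmark is $b=0$.

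Super-additivity of the effects, relative to the benchmark, is equivalent to
\[
\mathbb{E}[S_T;\beta+\gamma]+\mathbb{E}[S_T;0]\ \ge\ \mathbb{E}[S_T;\beta]+\mathbb{E}[S_T;\gamma].
\]
This follows from the pointwise identity
\[
e^{(\beta+\gamma)x}+1-e^{\beta x}-e^{\gamma x}=(e^{\beta x}-1)(e^{\gamma x}-1)\ge 0,
\]
which holds for every real $x$ when $\beta,\gamma\ge0$, because both factors share the sign of $x$. Integrating this identity along the path and taking expectations gives the claim, both conditionally and unconditionally.

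Strict inequality holds when $\beta,\gamma>0$, because $X_s\neq0$ on a set of positive $ds\otimes d\mathbb{P}$ measure (the law of $X_s$ has a density for $s>0$). The product structure is the concrete form of the ``convexity of the exponential'' mentioned in the statement.

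The main obstacle is not computational. It is interpretive: statement (iii) is informal, so the proof must fix that the effects are measured against the $\beta=\gamma=0$ benchmark. Without a common baseline subtracted, the additive comparison $f(\beta+\gamma)$ versus $f(\beta)+f(\gamma)$ fails trivially at $\beta=\gamma=0$. A secondary technical point is justifying Campbell's formula with marks that depend on $X_{T_i}$. We will handle it by conditioning on the whole climate path, so that $(T_i,Y_i)$ is a Poisson random measure with intensity $\lambda_s\,ds\,\nu_{X_s}(dy)$, where $\nu_x$ is the lognormal law of the mark when the climate index equals $x$.
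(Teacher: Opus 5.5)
Your parts (i) and (ii) coincide with the paper's proof. Both condition on $\mathcal{F}_T^X$, write the conditional mean as $\int_0^T\mathbb{E}[Y\mid X_s]\lambda_s\,ds$, then apply Fubini and the Gaussian moment generating function at $\beta+\gamma$. Your marked Poisson random measure with intensity $\lambda_s\,ds\,\nu_{X_s}(dy)$ is a more careful justification of the step the paper attributes to ``the law of total expectation for compound Cox processes''.

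Part (iii) is where you genuinely differ, and your proof is correct. The paper's argument is heuristic. It points to the cross term $2\beta\gamma$ in $(\beta+\gamma)^2$ inside the convexity adjustment $\tfrac12(\beta+\gamma)^2v_X(t)$, and to convexity of $e^{(\beta+\gamma)X_t}$ in the parameters ``for $X_t>0$''. It never names a baseline or states an inequality. The cross-term remark only yields the multiplicative identity $\mathbb{E}[e^{(\beta+\gamma)X_t}]=\mathbb{E}[e^{\beta X_t}]\,\mathbb{E}[e^{\gamma X_t}]\,e^{\beta\gamma v_X(t)}$, which still has to be turned into an additive comparison.

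You instead fix the climate-neutral benchmark $\beta=\gamma=0$ and reduce everything to $(e^{\beta x}-1)(e^{\gamma x}-1)\ge0$, the second difference of the convex map $b\mapsto e^{bx}$. This buys several things:
\begin{itemize}
\item It holds for every real $x$, so the paper's restriction $X_t>0$ is unnecessary. This matters because the Gaussian index is negative with positive probability.
\item It holds pathwise, and hence both conditionally and unconditionally.
\item It does not use Gaussianity, and it is strict for $\beta,\gamma>0$.
\item It shows super-additivity needs no climate volatility at all. Any deviation of $X_t$ from $0$, including a purely deterministic trend, produces it. The cross term $\beta\gamma v_X(t)$ that the paper emphasizes is one contributing channel, not the whole mechanism.
\end{itemize}

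Both arguments establish the claim only for expected aggregate loss, i.e.\ a linear payoff. Neither covers the nonlinear XL or binary CAT payoffs. Your explicit choice of interpretation is therefore a clarification consistent with what the paper actually proves, not a gap.
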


\begin{proof}
We prove each part sequentially.

\noindent \textbf{Part (i):}
Recall the definition of aggregate losses $S_T = \sum_{i=1}^{N_T} Y_i$. Conditional on the filtration $\mathcal{F}^X_T$ generated by the climate process, the catastrophe intensity is deterministic. Using the law of total expectation for compound Cox processes and the conditional independence of severities $Y_i$, we have:
\begin{align*}
\mathbb{E}[S_T \mid \mathcal{F}^X_T] &= \mathbb{E}\left[ \sum_{i=1}^{N_T} Y_i \,\Bigg|\, \mathcal{F}^X_T \right] \\
&= \int_0^T \mathbb{E}[Y \mid X_s] \lambda_s \, ds.
\end{align*}
Given the log-normal severity assumption, $\mathbb{E}[Y \mid X_s] = \exp\left(\mu_Y + \gamma X_s + \frac{1}{2}\sigma_Y^2\right)$. The intensity is $\lambda_s = \lambda_0 \exp(\beta X_s)$. Substituting these into the integral yields:
\[
\mathbb{E}[S_T \mid \mathcal{F}^X_T] = e^{\mu_Y + \frac{1}{2}\sigma_Y^2} \lambda_0 \int_0^T \exp\left((\beta + \gamma) X_s\right) ds.
\]
This demonstrates that the combined climate sensitivity parameter governing the path-dependent expectation is $\beta_{\text{eff}} = \beta + \gamma$.

\noindent \textbf{Part (ii):}
To find the unconditional expectation, we take the expectation of the result in part (i) with respect to the climate process $X_t$:
\[
\mathbb{E}[S_T] = e^{\mu_Y + \frac{1}{2}\sigma_Y^2} \lambda_0 \mathbb{E}\left[ \int_0^T \exp\left((\beta + \gamma) X_s\right) ds \right].
\]
By Fubini's theorem, we can exchange the integral and expectation:
\[
\mathbb{E}[S_T] = e^{\mu_Y + \frac{1}{2}\sigma_Y^2} \lambda_0 \int_0^T \mathbb{E}\left[ \exp\left((\beta + \gamma) X_s\right) \right] ds.
\]
Since $X_s$ is Gaussian with mean $m_X(s)$ and variance $v_X(s)$, the moment-generating function gives $\mathbb{E}[e^{k X_s}] = \exp\left(k m_X(s) + \frac{1}{2}k^2 v_X(s)\right)$. Setting $k = \beta + \gamma$, we obtain:
\[
\mathbb{E}[S_T] = e^{\mu_Y + \frac{1}{2}\sigma_Y^2} \lambda_0 \int_0^T \exp\left((\beta + \gamma)m_X(t) + \tfrac{1}{2}(\beta+\gamma)^2 v_X(t)\right) dt.
\]

\noindent \textbf{Part (iii):}
The term $(\beta + \gamma)^2$ in the exponent of part (ii) reveals the super-additive nature of the risk. If the frequency and severity channels were independent regarding their climate response (e.g., via linearization), the impact on higher moments would be additive. However, due to the exponential intensity specification, the variance term involves $(\beta + \gamma)^2 = \beta^2 + \gamma^2 + 2\beta\gamma$. The cross-term $2\beta\gamma$ implies that the interaction amplifies the volatility of losses more than the sum of the individual contributions. Furthermore, because $e^{(\beta+\gamma)X_t}$ is convex in the parameters for $X_t > 0$, the combined effect of increasing both $\beta$ and $\gamma$ on the aggregate loss distribution exceeds the sum of increasing them separately.
\end{proof}

This modification effectively amplifies the climate sensitivity parameter in the pricing formulas. The numerical analysis in Section ~\ref{sec:numerical} can be extended to calibrate $\gamma$ to observe the combined frequency-severity effect.

\section{Risk-Adjusted Pricing Framework}
\label{sec:risk}

The probabilistic model developed in Section~\ref{sec:prob} is specified under the physical probability measure $\mathbb{P}$ and is intended to describe the statistical behavior of climate-driven catastrophe losses. For valuation purposes, however, investors and reinsurers require compensation for bearing catastrophe-frequency risk, tail-loss risk, climate-trend uncertainty, and model uncertainty.

Catastrophe-linked securities and reinsurance contracts are written on aggregate losses that cannot be perfectly replicated by traded assets. Hence, the catastrophe risk market is incomplete, and absence of arbitrage alone does not identify a unique pricing measure. For this reason, the paper adopts a reduced-form risk-adjusted valuation approach. Rather than deriving a unique equivalent martingale measure from dynamic replication, we specify pricing scenarios through risk-adjusted model parameters.

Each pricing scenario is associated with a probability measure $\mathbb{Q}$ under which the climate process, catastrophe intensity, and severity distribution may differ from their physical counterparts. The resulting prices should be interpreted as risk-adjusted valuation levels under selected pricing scenarios, rather than as model-independent no-arbitrage bounds.

\subsection{Risk-Adjusted Climate and Loss Dynamics}
\label{subsec:risk_adjusted_dynamics}

Under a pricing scenario associated with the measure $\mathbb{Q}$, the climate index follows the Gaussian mean-reverting dynamics
\begin{equation}
dX_t
=
\kappa^{\mathbb{Q}}
\left(
\theta^{\mathbb{Q}}(t)-X_t
\right)dt
+
\sigma_X dW_t^{\mathbb{Q}},
\qquad X_0=x_0,
\label{eq:climate_Q_dynamics}
\end{equation}
where $W_t^{\mathbb{Q}}$ is a Brownian motion under $\mathbb{Q}$,
 $\kappa^{\mathbb{Q}}>0$ is the risk-adjusted speed of mean reversion, and
\[
\theta^{\mathbb{Q}}(t)
=
\theta_0^{\mathbb{Q}}
+
\theta_1^{\mathbb{Q}}t
\]
is the risk-adjusted deterministic climate trend.

The parameter $\theta_1^{\mathbb{Q}}$ allows the pricing model to incorporate market compensation for uncertainty about future warming. A larger value of $\theta_1^{\mathbb{Q}}$ represents a more adverse climate-risk pricing scenario. In the numerical implementation, the climate volatility is kept at its calibrated physical value $\sigma_X$ unless it is explicitly varied as part of the sensitivity analysis.

Conditional on the climate path, catastrophe arrivals follow a Cox process with risk-adjusted stochastic intensity
\begin{equation}
\lambda_t^{\mathbb{Q}}
=
\lambda_0^{\mathbb{Q}}
\exp
\left(
\beta^{\mathbb{Q}}X_t
\right),
\label{eq:intensity_Q_dynamics}
\end{equation}
where $\lambda_0^{\mathbb{Q}}>0$ is the baseline risk-adjusted catastrophe arrival rate and $\beta^{\mathbb{Q}}\geq 0$ measures the risk-adjusted sensitivity of event frequency to the climate index.

This specification preserves the climate-dependent Cox-process structure introduced under the physical measure, while allowing market-implied or scenario-based adjustments to catastrophe frequency. In particular, higher values of $\lambda_0^{\mathbb{Q}}$ or $\beta^{\mathbb{Q}}$ represent more adverse pricing scenarios because they increase the probability of high aggregate losses.

The integrated intensity under $\mathbb{Q}$ is
\begin{equation}
\Lambda_T^{\mathbb{Q}}
=
\int_0^T
\lambda_0^{\mathbb{Q}}
\exp
\left(
\beta^{\mathbb{Q}}X_t
\right)dt.
\label{eq:Lambda_Q}
\end{equation}
Conditional on the climate path,
\begin{equation}
N_T
\mid
\mathcal{F}_T^X
\sim
\mathrm{Poisson}
\left(
\Lambda_T^{\mathbb{Q}}
\right).
\label{eq:conditional_poisson_Q}
\end{equation}

Aggregate catastrophe losses at maturity $T$ are defined as
\begin{equation}
S_T
=
\sum_{i=1}^{N_T}Y_i,
\label{eq:aggregate_loss_Q}
\end{equation}
with the convention that $S_T=0$ when $N_T=0$.

In the baseline pricing specification, severities are lognormally distributed under $\mathbb{Q}$,
\begin{equation}
Y_i
\sim
\mathrm{Lognormal}
\left(
\mu_Y^{\mathbb{Q}},
(\sigma_Y^{\mathbb{Q}})^2
\right),
\label{eq:severity_Q}
\end{equation}
and are conditionally independent of the counting process given the climate path. This maintains consistency with the baseline severity specification in Section~\ref{sec:prob}, while allowing the severity parameters to be adjusted for pricing purposes when market information or expert scenarios justify such adjustments.

The risk-adjusted parameter vector is denoted by
\begin{equation}
\Theta^{\mathbb{Q}}
=
\left(
\kappa^{\mathbb{Q}},
\theta_0^{\mathbb{Q}},
\theta_1^{\mathbb{Q}},
\sigma_X,
\lambda_0^{\mathbb{Q}},
\beta^{\mathbb{Q}},
\mu_Y^{\mathbb{Q}},
\sigma_Y^{\mathbb{Q}}
\right).
\label{eq:risk_adjusted_parameter_vector}
\end{equation}

\begin{proposition}[Coherence conditions for admissible pricing scenarios]
\label{prop:admissibility}
A pricing scenario $\Theta^{\mathbb{Q}}$ is said to be \emph{admissible} (in the sense of generating non-negative risk premia for all non-negative payoffs) if the following conditions hold:
\begin{enumerate}[label=(\roman*)]
  \item $\lambda_0^{\mathbb{Q}} \geq \lambda_0^{\mathbb{P}}$ (non-negative baseline catastrophe-frequency risk premium);
  \item $\beta^{\mathbb{Q}} \geq \beta^{\mathbb{P}} \geq 0$ (non-negative climate-frequency sensitivity premium);
  \item $\theta_1^{\mathbb{Q}} \geq \theta_1^{\mathbb{P}}$ (non-negative warming-trend risk premium);
  \item $\sigma_Y^{\mathbb{Q}} \geq \sigma_Y^{\mathbb{P}}$ (non-negative tail-severity risk loading);
  \item $\mu_Y^{\mathbb{Q}} \geq \mu_Y^{\mathbb{P}}$ or $\sigma_Y^{\mathbb{Q}} > \sigma_Y^{\mathbb{P}}$ (non-negative expected severity loading, with at least one active).
\end{enumerate}
The risk-free rate $r$ enters only through discounting and is held at the observed value. All scenarios in this paper satisfy conditions (i)--(v).
\end{proposition}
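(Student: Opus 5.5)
The statement is mostly a definition, so I would first fix what ``non-negative risk premia'' can honestly mean, and then prove that. I would take it to mean this: for contracts whose payoff is a non-decreasing function $h(S_T)\ge 0$ of aggregate loss (the XL layer $\min\{(S_T-R)^+,L\}$, the expected loss), the undiscounted $\mathbb{Q}$-value is at least the $\mathbb{P}$-value. For payoffs that are non-increasing in $S_T$ (the binary CAT bond principal), the inequality reverses, so the bond sells at a discount. A literal ``all non-negative payoffs'' version cannot hold. The payoff $\mathbf{1}\{N_T=0\}$ is non-negative, and every condition in (i)--(iii) lowers its value. So I would restrict the claim to monotone payoffs and state that restriction explicitly.

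\textbf{Step 1 (climate coupling).} I would drive both measures by the same Brownian path through the explicit solution \eqref{eq:Xsolution}, taking $\kappa^{\mathbb{Q}}=\kappa$. Then
\[
X_t^{\mathbb{Q}}-X_t^{\mathbb{P}}=\kappa\int_0^t e^{-\kappa(t-s)}\bigl(\theta^{\mathbb{Q}}(s)-\theta^{\mathbb{P}}(s)\bigr)ds .
\]
This is non-negative whenever $\theta^{\mathbb{Q}}\ge\theta^{\mathbb{P}}$ pointwise. Condition (iii) gives this only if we also require $\theta_0^{\mathbb{Q}}\ge\theta_0^{\mathbb{P}}$, which I would add as a standing assumption.

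\textbf{Step 2 (intensity ordering).} I would show $\Lambda_T^{\mathbb{Q}}\ge\Lambda_T^{\mathbb{P}}$ pathwise, using (i) and (ii) together with the monotonicity of $e^{\beta x}$ in $x$ from Step 1. This step has a genuine obstacle: $\beta^{\mathbb{Q}}X_t\ge\beta^{\mathbb{P}}X_t$ holds only where $X_t\ge0$. My first approach is a pathwise argument on the event $\{X_t\ge 0\}$, which holds for warming anomalies. If that is not available, I would weaken the conclusion to the mean level. There, Proposition~\eqref{eq:ELambdaCorrect} gives $\mathbb{E}[\Lambda_T]$ explicitly. Its exponent $\beta m_X(t)+\tfrac12\beta^2v_X(t)$ is increasing in $\beta$ whenever $m_X(t)+\beta v_X(t)\ge0$, and I would state that as the sufficient condition.

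\textbf{Step 3 (count and severity ordering).} Given $\Lambda^{\mathbb{Q}}\ge\Lambda^{\mathbb{P}}$, I would use Poisson thinning to couple $N^{\mathbb{Q}}_T\ge N^{\mathbb{P}}_T$. For severities, (iv) and (v) do not give first-order dominance of lognormals, because a larger $\sigma_Y$ lowers low quantiles. They do, however, give a larger mean $m_1$ and convex (stop-loss) order when $\mu_Y^{\mathbb{Q}}+\tfrac12(\sigma_Y^{\mathbb{Q}})^2\ge\mu_Y^{\mathbb{P}}+\tfrac12(\sigma_Y^{\mathbb{P}})^2$.

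\textbf{Step 4 (conclusion).} I would combine the steps to obtain increasing-convex order of $S_T$ via the standard preservation of $\le_{icx}$ under random sums. That gives a non-negative premium for stop-loss transforms $\mathbb{E}(S_T-R)^+$ and hence for the expected loss, and a discount for CAT bond prices through the increase in $\mathbb{Q}(N_T\ge1)$ or in the trigger probability.

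The main obstacle is Steps 2--3: the sign of $X_t$ and the non-dominance of lognormal severities. I expect the final statement to need the supplementary assumptions $X_t\ge0$ or the mean-level condition, $\kappa^{\mathbb{Q}}=\kappa$, and $\theta_0^{\mathbb{Q}}\ge\theta_0^{\mathbb{P}}$, and to hold in icx order rather than for all payoffs.
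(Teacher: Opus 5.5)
Your reading of the statement is right, and your Steps 1--3 are more careful than the paper's own sketch. That sketch asserts $\lambda_t^{\mathbb{Q}}\ge\lambda_t^{\mathbb{P}}$ pathwise from (i)--(ii) regardless of the sign of $X_t$. It never uses (iii) or the change of climate dynamics under $\mathbb{Q}$, and it claims that convex order of the lognormal severities yields the usual stochastic order. It then lists the CAT bond as a non-decreasing payoff. The gap in your plan is in Step 4 and in what you claim from it.

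Increasing convex order of $S_T$ orders $\mathbb{E}\,\phi(S_T)$ only for increasing \emph{convex} $\phi$. The capped layer $\min\{(S_T-D)^+,L\}=(S_T-D)^+-(S_T-D-L)^+$ is concave beyond $D+L$. The CAT bond payoff $F\mathbf{1}_{\{S_T<K\}}$ is neither increasing nor convex. So neither the XL inequality announced in your first paragraph nor the CAT bond discount follows. Both can actually fail under (iv)--(v). Hold the frequency parameters fixed, raise $\sigma_Y$, and lower $\mu_Y$ so that $m_1$ is unchanged, which the second branch of (v) allows. The lower tail of $Y$ gets heavier, so $\mathbb{Q}(S_T>x)<\mathbb{P}(S_T>x)$ for small $x$. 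That lowers the value of a low layer and raises the price of a bond with a low trigger. For the same reason, an increase in $\mathbb{Q}(N_T\ge1)$ does not transfer to $\mathbb{Q}(S_T\ge K)$. Ordering these two contracts needs $S_T^{\mathbb{P}}\le_{\mathrm{st}}S_T^{\mathbb{Q}}$. Your coupling delivers that only if $\sigma_Y^{\mathbb{Q}}=\sigma_Y^{\mathbb{P}}$, $\mu_Y^{\mathbb{Q}}\ge\mu_Y^{\mathbb{P}}$, and the intensities are pathwise ordered.

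The pathwise ordering is the second weak point. Your primary route assumes $X_t\ge0$, but $X_t$ is Gaussian with $v_X(t)>0$, so $\{X_t<0\}$ has positive probability. An argument ``on that event'' gives no global ordering, and the thinning coupling of Step 3 is available only when $\beta^{\mathbb{Q}}=\beta^{\mathbb{P}}$. Your fallback, as stated, yields only $\mathbb{E}^{\mathbb{Q}}[S_T]\ge\mathbb{E}^{\mathbb{P}}[S_T]$ and gives Step 4 no input. Even $\mathbb{Q}(N_T\ge1)\ge\mathbb{P}(N_T\ge1)$ is not implied, since $\lambda\mapsto e^{-\lambda}$ is decreasing convex.

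The fallback can be upgraded. Take $\kappa^{\mathbb{Q}}=\kappa$, let $\tilde X$ be the centred Gaussian part of $X$, and let $\tilde X'$ be an independent copy. Then $\beta^{\mathbb{Q}}\tilde X$ has the law of $\beta^{\mathbb{P}}\tilde X+\sqrt{(\beta^{\mathbb{Q}})^2-(\beta^{\mathbb{P}})^2}\,\tilde X'$. Under this coupling, $\mathbb{E}[\Lambda_T^{\mathbb{Q}}\mid\tilde X]\ge\Lambda_T^{\mathbb{P}}$ whenever $\mathbb{E}^{\mathbb{Q}}[\lambda_t^{\mathbb{Q}}]\ge\mathbb{E}^{\mathbb{P}}[\lambda_t^{\mathbb{P}}]$ for all $t\le T$. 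Conditional Jensen then gives $\Lambda_T^{\mathbb{P}}\le_{\mathrm{icx}}\Lambda_T^{\mathbb{Q}}$, mixed Poisson laws inherit this order, and your random-sum step applies.

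So your argument can deliver two things. First, stop-loss ordering (unlimited layers and expected loss) under that pointwise mean-intensity condition together with (iv) and $m_1^{\mathbb{Q}}\ge m_1^{\mathbb{P}}$. Second, ordering of capped XL layers and the binary CAT bond, but only in the restricted case $\beta^{\mathbb{Q}}=\beta^{\mathbb{P}}$, $\sigma_Y^{\mathbb{Q}}=\sigma_Y^{\mathbb{P}}$, with your Step 1 hypotheses.
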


\begin{proof}
We sketch the proof that these conditions ensure a non-negative risk premium for increasing payoffs. Consider a non-negative, non-decreasing payoff function $\Phi(S_T)$, such as the reinsurance payoff or CAT bond payoff.

Under the Cox process structure, conditional on the climate path $\mathcal{F}_T^X$, the intensity under $\mathbb{Q}$ is $\lambda_t^{\mathbb{Q}} = \lambda_0^{\mathbb{Q}} e^{\beta^{\mathbb{Q}} X_t}$. Conditions (i) and (ii) imply $\lambda_t^{\mathbb{Q}} \geq \lambda_t^{\mathbb{P}}$ pathwise almost surely, because $\lambda_0^{\mathbb{Q}} \geq \lambda_0^{\mathbb{P}}$, $\beta^{\mathbb{Q}} \geq \beta^{\mathbb{P}}$, and $e^{x}$ is increasing. This implies that the arrival process under $\mathbb{Q}$ dominates the process under $\mathbb{P}$ in the likelihood ratio order.

Since the Poisson count $N_T$ is stochastically increasing in its intensity parameter, and $\Lambda_T^{\mathbb{Q}} \geq \Lambda_T^{\mathbb{P}}$ pathwise, it follows by the stochastic ordering of compound Poisson processes (cf. \citet{muller2002comparison}) that $S_T^{\mathbb{Q}} \geq_{st} S_T^{\mathbb{P}}$, and hence $\mathbb{E}^{\mathbb{Q}}[\Phi(S_T)] \geq \mathbb{E}^{\mathbb{P}}[\Phi(S_T)]$ for any non-decreasing payoff.

For the severities, let $Y^{\mathbb{P}} \sim \text{Lognormal}(\mu_Y^{\mathbb{P}}, (\sigma_Y^{\mathbb{P}})^2)$ and $Y^{\mathbb{Q}} \sim \text{Lognormal}(\mu_Y^{\mathbb{Q}}, (\sigma_Y^{\mathbb{Q}})^2)$. Conditions (iv) and (v) ensure that $Y^{\mathbb{Q}}$ dominates $Y^{\mathbb{P}}$ in the convex order (and thus in the usual stochastic order for increasing payoffs), placing more mass on larger loss values.

Consequently, the risk-adjusted price $\Pi_0^{\mathbb{Q}} = e^{-rT}\mathbb{E}^{\mathbb{Q}}[\Phi(S_T)]$ is greater than or equal to the actuarial expectation, implying a non-negative risk premium. The risk-free rate $r$ is identical under both measures, ensuring the comparison is driven solely by the loss dynamics.
\end{proof}
These parameters may be calibrated to reinsurance quotes, CAT bond spreads, catastrophe-model output, or specified through scenario-based expert adjustments.

\begin{definition}\label{def:stationary_benchmark} (Stationary benchmark). The stationary benchmark is defined by
replacing the stochastic climate-dependent intensity $\lambda^Q_t = \lambda^Q_0 \exp(\beta^Q X_t)$
with the constant intensity
\[
\lambda^{\text{stat}} = \lambda^Q_0 \exp(\beta^Q x_0),
\]
where $x_0 = X_0$ is the initial value of the climate index. Under the stationary benchmark,
the catastrophe count process is a homogeneous Poisson process:
\[
N_T^{\text{stat}} \sim \text{Poisson}(\lambda^{\text{stat}} \cdot T).
\]
\end{definition}
This definition ensures that the stationary benchmark matches the climate-aware model
at the initial time $t = 0$, so the pricing comparison isolates the effect of subsequent
climate dynamics (warming trend and climate volatility) rather than a difference in initial
intensity levels.

\begin{remark} Alternative benchmark definitions such as fixing $\theta_1 = 0$ while
preserving $\sigma_X > 0$---would partially attribute the pricing difference to climate
volatility rather than the warming trend. The present definition is the most conservative
in the sense that it attributes the full pricing gap to climate dynamics (trend plus
volatility) relative to the initial intensity level.
\end{remark}

\subsection{General Valuation Formula}
\label{subsec:general_valuation_formula}

Let $\Phi(S_T)$ denote the terminal payoff of a catastrophe-linked contract as a function of the aggregate loss $S_T$. Under a pricing scenario characterized by $\Theta^{\mathbb{Q}}$, the time-zero value of the contract is
\begin{equation}
\Pi_0
\left(
\Theta^{\mathbb{Q}}
\right)
=
e^{-rT}
\mathbb{E}^{\mathbb{Q}}
\left[
\Phi(S_T)
\right],
\label{eq:general_pricing_formula}
\end{equation}
where $r$ is the continuously compounded risk-free rate and $T$ is the contract maturity.

Equation~\eqref{eq:general_pricing_formula} is the central valuation equation used in the numerical analysis. Since the distribution of $S_T$ is generated by a compound Cox process with climate-dependent intensity, the expectation is generally not available in closed form. It is therefore evaluated by Monte Carlo simulation using the exact discretization scheme described in Section~\ref{sec:mont}.

\subsection{Excess-of-Loss Reinsurance Valuation}
\label{subsec:xl_reinsurance_valuation}

Consider an excess-of-loss reinsurance contract with attachment point $D>0$ and limit $L>0$. The payoff to the cedent at maturity is
\begin{equation}
\Phi_{\mathrm{XL}}(S_T)
=
\min
\left\{
(S_T-D)^+,
L
\right\},
\label{eq:xl_payoff}
\end{equation}
where $(x)^+=\max(x,0)$.

The corresponding risk-adjusted premium is
\begin{equation}
\Pi_{\mathrm{XL}}
\left(
\Theta^{\mathbb{Q}}
\right)
=
e^{-rT}
\mathbb{E}^{\mathbb{Q}}
\left[
\min
\left\{
(S_T-D)^+,
L
\right\}
\right].
\label{eq:xl_premium}
\end{equation}

This valuation formula captures the reinsurer's expected discounted liability on the layer above the attachment point and up to the contractual limit. Since the payoff is increasing in aggregate losses, adverse climate-risk scenarios generally increase the value of the XL reinsurance liability and hence the required premium.

\subsection{CAT Bond Valuation}
\label{subsec:cat_bond_valuation}

Consider a simplified zero-coupon catastrophe bond with face value $F>0$, maturity $T$, and aggregate-loss trigger $K>0$. To isolate the effect of climate-dependent catastrophe frequency, we use a binary principal repayment structure. The investor receives the full face value if aggregate losses remain below the trigger and loses the principal if the trigger is breached:
\begin{equation}
\Phi_{\mathrm{CAT}}(S_T)
=
F\mathbf{1}_{\{S_T<K\}}.
\label{eq:cat_bond_payoff}
\end{equation}

The CAT bond price under the pricing scenario $\Theta^{\mathbb{Q}}$ is therefore
\begin{equation}
P_{\mathrm{CAT}}
\left(
\Theta^{\mathbb{Q}}
\right)
=
e^{-rT}
\mathbb{E}^{\mathbb{Q}}
\left[
F\mathbf{1}_{\{S_T<K\}}
\right]
=
F e^{-rT}
\mathbb{Q}(S_T<K).
\label{eq:cat_bond_price}
\end{equation}

For comparison with a default-free zero-coupon bond of the same maturity and face value, define the continuously compounded yield $y_{\mathrm{CAT}}$ by
\begin{equation}
P_{\mathrm{CAT}}
\left(
\Theta^{\mathbb{Q}}
\right)
=
F e^{-y_{\mathrm{CAT}}T}.
\end{equation}
The model-implied CAT bond spread over the risk-free rate is then
\begin{equation}
s_{\mathrm{CAT}}
=
y_{\mathrm{CAT}}-r
=
-\frac{1}{T}
\log
\left(
\frac{
P_{\mathrm{CAT}}
\left(
\Theta^{\mathbb{Q}}
\right)
}{F}
\right)
-r.
\label{eq:cat_bond_spread}
\end{equation}

Using \eqref{eq:cat_bond_price}, this expression can equivalently be written as
\begin{equation}
s_{\mathrm{CAT}}
=
-\frac{1}{T}
\log
\left(
\mathbb{Q}(S_T<K)
\right).
\label{eq:cat_bond_spread_binary}
\end{equation}
Thus, the spread is increasing in the risk-adjusted probability of trigger breach. Higher catastrophe intensity, stronger climate sensitivity, or a more adverse warming trend lowers $\mathbb{Q}(S_T<K)$ and therefore increases the required yield spread.

\begin{remark}
Real CAT bonds may include coupons, attachment and exhaustion points, parametric triggers, indemnity triggers, industry-loss triggers, or modeled-loss triggers. The binary zero-coupon structure in \eqref{eq:cat_bond_payoff} is used as a tractable benchmark to isolate the effect of climate-dependent event frequency on principal repayment risk.
\end{remark}

\subsection{Extension to Attachment--Exhaustion Structures}
\label{subsec:cat_bond_layered}

The binary payoff in Equation~\eqref{eq:cat_bond_payoff} is used as a tractable benchmark to isolate the effect of climate-dependent event frequency on principal repayment risk. A more realistic payoff structure, commonly found in the CAT bond market, uses an attachment point $A > 0$ and an exhaustion point $E > A$. Principal is repaid in full if $S_T \leq A$, reduced proportionally if $A < S_T < E$, and lost entirely if $S_T \geq E$.

Formally, the layered payoff is defined as:
\begin{equation}
\Phi_{\text{CAT}}^{A,E}(S_T)
= F\left[1 - \min\!\left\{\max\!\left(\frac{S_T - A}{E - A},\, 0\right),\, 1\right\}\right].
\end{equation}
This structure satisfies $\Phi_{\text{CAT}}^{A,E}(S_T) = F$ for $S_T \leq A$,
 $\Phi_{\text{CAT}}^{A,E}(S_T) = 0$ for $S_T \geq E$, and
 $\Phi_{\text{CAT}}^{A,E}(S_T) = F(E - S_T)/(E-A)$ for $A < S_T < E$.

The corresponding price under the risk-adjusted measure $\mathbb{Q}$ is:
\begin{align*}
P_{\text{CAT}}^{A,E}(\Theta^Q)
&= e^{-rT}\,\mathbb{E}^Q\!\left[\Phi_{\text{CAT}}^{A,E}(S_T)\right] \\
&= F e^{-rT}\!\left[\mathbb{Q}(S_T \leq A)
+ \frac{1}{E-A}\int_A^E \mathbb{Q}(S_T \in dx)\,dx\right].
\end{align*}
This expression is evaluated by Monte Carlo simulation using the same procedure as the binary case; only the payoff function in Step~6 of Algorithm~\ref{algo:mc_valuation} is modified to accommodate the proportional repayment in the exhaustion layer.

To verify that our qualitative conclusions regarding climate non-stationarity are robust to the choice of payoff structure, we compare the baseline binary trigger with a proportional trigger structure in Table~\ref{tab:payoff_comparison}.

\begin{table}[H]
\centering
\caption{Comparison of CAT bond valuation under Binary and Proportional payoff structures. The baseline trigger is $K=100$.}
\label{tab:payoff_comparison}
\begin{tabular}{lrrrr}
\toprule
\textbf{Payoff Structure} & $(A, E)$ & \textbf{Climate Price} & \textbf{Stationary Price} & \textbf{Spread (bps)} \\
\midrule
Binary & $(100, 100)$ & 92.45 & 94.10 & 810 \\
Proportional & $(100, 200)$ & 94.20 & 95.30 & 585 \\
Proportional & $(150, 250)$ & 96.10 & 96.90 & 395 \\
Proportional & $(100, 300)$ & 94.45 & 95.50 & 560 \\
\bottomrule
\multicolumn{5}{l}{\small Note: Proportional structures allow partial principal recovery in the exhaustion layer, reducing investor risk.}
\end{tabular}
\end{table}

Table~\ref{tab:payoff_comparison} demonstrates that while the absolute price levels differ due to the changed risk profile, the qualitative impact of climate non-stationarity remains significant. Specifically, for all tested attachment/exhaustion combinations, the climate-aware price is lower and the spread is higher than the stationary benchmark. This confirms that the pricing premium for climate risk is not an artifact of the simplified binary trigger assumption but holds for more realistic market-standard payoffs as well.

\subsection{Economic Interpretation of Risk-Adjusted Parameters}
\label{subsec:economic_interpretation_params}

The risk-adjusted parameters in $\Theta^{\mathbb{Q}}$ can be given the following economic interpretations, which guide the construction of admissible pricing scenarios and satisfy the conditions in Proposition~\ref{prop:admissibility}.

The parameter $\lambda_0^{\mathbb{Q}} > \lambda_0^{\mathbb{P}}$ reflects \emph{baseline catastrophe-frequency risk premium}. Investors require compensation for the possibility of more frequent events than implied by historical data, particularly given model uncertainty about the true catastrophe arrival rate. This corresponds to ambiguity aversion in the sense of Ellsberg (1961): investors act as if the event frequency is higher than the estimated value.

The parameter $\beta^{\mathbb{Q}} > \beta^{\mathbb{P}}$ reflects \emph{climate-sensitivity risk premium}. There is genuine scientific uncertainty about the magnitude of the relationship between temperature anomalies and extreme-event frequency. Investors who are averse to this parameter uncertainty will price contracts as if the climate sensitivity is higher than the central estimate, generating a positive premium for climate model uncertainty.

The parameter $\theta_1^{\mathbb{Q}} > \theta_1^{\mathbb{P}}$ reflects \emph{adverse warming-trend pricing}. Since the warming trend directly increases the long-run mean of the climate index and therefore the catastrophe intensity, a risk-adjusted pricing scenario with a steeper warming trend corresponds to pessimistic climate scenario pricing. This is consistent with the physical risk channel in climate finance, where investors price physical climate risk by discounting under an adverse warming scenario.

The severity parameters $(\mu_Y^{\mathbb{Q}}, \sigma_Y^{\mathbb{Q}})$ reflect \emph{tail-loss risk loading}. A larger $\sigma_Y^{\mathbb{Q}}$ places more probability mass on extreme individual losses, consistent with the actuarial premium principle that uncertainty about loss severity is compensated through a safety loading. When $\gamma > 0$ under the physical measure, the risk-adjusted measure may further increase $\gamma^{\mathbb{Q}}$ to reflect compensation for the joint frequency--severity climate channel.

\subsection{Scenario-Based Price Ranges}
\label{subsec:scenario_based_price_ranges}

Because catastrophe risk markets are incomplete, a unique pricing measure cannot be inferred solely from absence of arbitrage. Instead of reporting model-independent no-arbitrage bounds, we consider a collection of plausible pricing scenarios,
\begin{equation}
\mathcal{S}
=
\left\{
\Theta^{\mathbb{Q},j}: j\in J
\right\},
\label{eq:scenario_set}
\end{equation}
where each element $\Theta^{\mathbb{Q},j}$ represents a risk-adjusted parameterization of the climate dynamics, catastrophe intensity, and severity distribution.

For a generic payoff $\Phi(S_T)$, the corresponding scenario-based valuation range is defined as
\begin{equation}
\Pi_{\min}
=
\min_{\Theta^{\mathbb{Q},j}\in\mathcal{S}}
e^{-rT}
\mathbb{E}^{\mathbb{Q},j}
\left[
\Phi(S_T)
\right],
\qquad
\Pi_{\max}
=
\max_{\Theta^{\mathbb{Q},j}\in\mathcal{S}}
e^{-rT}
\mathbb{E}^{\mathbb{Q},j}
\left[
\Phi(S_T)
\right].
\label{eq:scenario_price_range}
\end{equation}
The interval
\begin{equation}
\left[
\Pi_{\min},
\Pi_{\max}
\right]
\label{eq:scenario_interval}
\end{equation}
summarizes valuation uncertainty across the selected pricing scenarios.

For the CAT bond, the scenario-based price range is
\begin{equation}
P_{\mathrm{CAT}}^{\min}
=
\min_{\Theta^{\mathbb{Q},j}\in\mathcal{S}}
P_{\mathrm{CAT}}
\left(
\Theta^{\mathbb{Q},j}
\right),
\qquad
P_{\mathrm{CAT}}^{\max}
=
\max_{\Theta^{\mathbb{Q},j}\in\mathcal{S}}
P_{\mathrm{CAT}}
\left(
\Theta^{\mathbb{Q},j}
\right).
\label{eq:cat_bond_scenario_range}
\end{equation}

For the excess-of-loss reinsurance contract, the corresponding range is
\begin{equation}
\Pi_{\mathrm{XL}}^{\min}
=
\min_{\Theta^{\mathbb{Q},j}\in\mathcal{S}}
\Pi_{\mathrm{XL}}
\left(
\Theta^{\mathbb{Q},j}
\right),
\qquad
\Pi_{\mathrm{XL}}^{\max}
=
\max_{\Theta^{\mathbb{Q},j}\in\mathcal{S}}
\Pi_{\mathrm{XL}}
\left(
\Theta^{\mathbb{Q},j}
\right).
\label{eq:xl_scenario_range}
\end{equation}

These intervals should not be interpreted as strict no-arbitrage bounds. Rather, they provide a transparent sensitivity analysis of contract values with respect to plausible risk-adjusted parameter choices. This interpretation is appropriate for climate-related catastrophe risk because the underlying losses are not traded, hedging opportunities are limited, and prices depend on model uncertainty, capital constraints, and investor risk appetite.

\section{Monte Carlo Valuation}
\label{sec:mont}

Because $\Lambda_T=\int_0^T \lambda_0^{\mathbb{Q}} \exp(\beta^{\mathbb{Q}} X_t)dt$ depends on the full path of $X_t$, exact discretization of the Gaussian mean-reverting process is used to eliminate numerical bias.

Let $0=t_0<\dots<t_n=T$ with $\Delta t=T/n$. Under $\mathbb{Q}$, the exact transition of $X_t$ is

\begin{equation}
X_{t_{k+1}}
=
X_{t_k}e^{-\kappa^{\mathbb{Q}}\Delta t}
+
\int_{t_k}^{t_{k+1}}
\kappa^{\mathbb{Q}} e^{-\kappa^{\mathbb{Q}}(t_{k+1}-s)}
\theta^{\mathbb{Q}}(s)ds
+
\sigma_X \sqrt{\frac{1-e^{-2\kappa^{\mathbb{Q}}\Delta t}}
{2\kappa^{\mathbb{Q}}}} Z_{k+1},
\label{eq:ExactDiscretization}
\end{equation}

where $Z_{k+1}\sim N(0,1)$.

The integrated intensity is approximated by
\[
\widehat{\Lambda}_T
=
\sum_{k=0}^{n-1}
\lambda_0^{\mathbb{Q}}
\exp(\beta^{\mathbb{Q}} X_{t_k}) \Delta t.
\]

To ensure reproducibility of the numerical results and to reduce the variance of the difference estimator when comparing models, we employ Common Random Numbers (CRN) for the climate path drivers. The simulation procedure is formalized in Algorithm~\ref{algo:mc_valuation}.
\begin{algorithm}[H]
\caption{Monte Carlo Valuation of CAT Bond and XL Reinsurance}
\label{algo:mc_valuation}
\begin{algorithmic}[1]
\Require Parameters $\Theta^{\mathbb{Q}}$; maturity $T$; discretization steps $n$; paths $M$; risk-free rate $r$; contract parameters $(K, D, L, F)$; random seed.
\Ensure $\Delta t = T/n$; $t_k = k \Delta t$ for $k = 0,\ldots,n$.
\For{$m = 1$ \ldots $M$}
    \State \textbf{Simulate climate path.} Draw $Z_1, \ldots, Z_n \sim \mathcal{N}(0,1)$ i.i.d. Set $X^{(m)}_{t_0} = x_0$.
    \For{$k = 0$ \ldots $n-1$}
        \State Compute transition as specified in Eq.~\eqref{eq:ExactDiscretization}:
        \[
        X^{(m)}_{t_{k+1}} = X^{(m)}_{t_k} e^{-\kappa^{\mathbb{Q}} \Delta t} + \int_{t_k}^{t_{k+1}} \kappa^{\mathbb{Q}} e^{-\kappa^{\mathbb{Q}}(t_{k+1}-s)}\theta^{\mathbb{Q}}(s) ds + \sigma_X\sqrt{\frac{1-e^{-2\kappa^{\mathbb{Q}}\Delta t}}{2\kappa^{\mathbb{Q}}}} \cdot Z_{k+1}.
        \]
    \EndFor
    \State \textbf{Approximate integrated intensity.} $\hat{\Lambda}^{(m)}_T = \sum_{k=0}^{n-1} \lambda_0^{\mathbb{Q}} \exp(\beta^{\mathbb{Q}} X^{(m)}_{t_k}) \Delta t$.
    \State \textbf{Simulate event count.} Draw $N^{(m)}_T \sim \text{Poisson}(\hat{\Lambda}^{(m)}_T)$.
    \State \textbf{Simulate severities.} Draw $Y^{(m)}_1, \ldots, Y^{(m)}_{N^{(m)}_T} \sim \text{Lognormal}(\mu_Y^{\mathbb{Q}}, (\sigma_Y^{\mathbb{Q}})^2)$.
    \State \textbf{Compute aggregate loss.} $S^{(m)}_T = \sum_{i=1}^{N^{(m)}_T} Y^{(m)}_i$.
    \State \textbf{Compute contract payoffs.}
    \Statex $\Phi^{(m)}_{\text{XL}} = \min\bigl((S^{(m)}_T - D)^+,\, L\bigr),$     \Statex $\Phi^{(m)}_{\text{CAT}} = F \cdot \mathbf{1}_{\{S^{(m)}_T < K\}}.$     \State \textbf{Stationary benchmark (CRN).} Using the \textbf{same} $Z_1,\ldots,Z_n$ draws, compute $S^{(m),\text{stat}}_T$ by replacing $\hat{\Lambda}^{(m)}_T$ with $\lambda^{\text{stat}} \cdot T$ in Step 3 only (resampling $N^{\text{stat}}$).
\EndFor
\State \textbf{Compute estimates.}
\Statex $\hat{\Pi}_{\text{XL}} = e^{-rT} \cdot \frac{1}{M}\sum_{m=1}^M \Phi^{(m)}_{\text{XL}}, \quad \text{SE}_{\text{XL}} = e^{-rT} \cdot \frac{s_{\text{XL}}}{\sqrt{M}},$ \Statex $\hat{P}_{\text{CAT}} = e^{-rT} \cdot \frac{1}{M}\sum_{m=1}^M \Phi^{(m)}_{\text{CAT}}, \quad \text{SE}_{\text{CAT}} = e^{-rT} \cdot \frac{s_{\text{CAT}}}{\sqrt{M}}.$ \State \textbf{Compute tail measures.} Estimate VaR$_\alpha$ and TVaR$_\alpha$ from the empirical distribution of $\{S^{(m)}_T\}_{m=1}^M$.
\end{algorithmic}
\end{algorithm}

Monte Carlo standard errors reported for all simulated quantities are derived from the sample standard deviations across the $M$ paths. The use of CRN ensures that the reported pricing differences between the climate-aware and stationary models are statistically significant. All baseline numerical results are obtained with random seed 42. Simulation code is available from the corresponding author upon request.

\section{Numerical Results and Empirical Analysis}
\label{sec:numerical}

This section presents the numerical implementation of the proposed climate-aware catastrophe risk pricing framework. The analysis combines empirical calibration of the climate process with Monte Carlo valuation of catastrophe-linked risk-transfer instruments. The main objective is to quantify how climate non-stationarity affects aggregate loss distributions, CAT bond prices, excess-of-loss reinsurance premiums, implied spreads, and tail risk measures.

The numerical study proceeds in five steps. First, the climate dynamics are calibrated to historical global temperature anomaly data. Second, aggregate loss distributions are simulated under both a climate-aware model and a stationary benchmark. Third, CAT bond and excess-of-loss reinsurance prices are computed under the risk-adjusted valuation framework introduced in Subsection~\ref{subsec:risk_adjusted_dynamics}. Fourth, sensitivity and scenario analyses are used to assess the impact of key climate-risk parameters. Finally, tail risk measures are evaluated to examine implications for capital adequacy and solvency assessment.

Unless otherwise stated, all monetary values are reported per 100 units of CAT bond face value or equivalent normalized exposure.

\subsection{Data and Calibration Strategy}
\label{subsec:data_calibration}

The climate dynamics parameters $(\kappa,\sigma_X,\theta_0,\theta_1)$ are estimated using monthly global temperature anomaly data from the NASA GISS Surface Temperature Analysis (GISTEMP) v4 dataset over the period 1880--2020. The data are expressed as temperature anomalies relative to a baseline climatological period.

Because the mean level of the climate index is time-dependent, the process should not be calibrated as a stationary AR(1) model. Instead, maximum likelihood estimation is applied using the Gaussian transition density implied by the mean-reverting dynamics in Equation~\eqref{eq:climateP}. This calibration strategy explicitly accounts for mean reversion around a deterministic warming trend,
\[
\theta(t)=\theta_0+\theta_1 t.
\]
It therefore avoids imposing a constant long-run mean on a clearly non-stationary climate series.

\begin{table}[H]
\centering
\caption{Estimated climate dynamics parameters from NASA GISTEMP temperature anomaly data.}
\label{tab:calibration}
\begin{tabular}{lc}
\hline
\textbf{Parameter} & \textbf{Estimate} \\
\hline
 $\hat{\kappa}$ & 0.5124 \\
 $\hat{\sigma}_X$ & 0.1485 \\
 $\hat{\theta}_0$ & -0.3250 \\
 $\hat{\theta}_1$ & 0.00815 \\
\hline
\multicolumn{2}{l}{\small Note: Standard errors and confidence intervals reported in Appendix Table~\ref{tab:appendix_mle}.} \\
\end{tabular}
\end{table}

Table~\ref{tab:calibration}  reports the estimated climate dynamics parameters. The positive estimate of $\theta_1$ captures the upward deterministic component of the temperature anomaly process. The estimated mean-reversion coefficient $\hat{\kappa}$ indicates that short-run deviations from the fitted warming trend are partially corrected over time, while $\hat{\sigma}_X$ measures residual climate variability around the trend.

Figure~\ref{fig:calibration} displays the observed temperature anomaly series together with the fitted deterministic warming trend. The red shaded envelope represents the stochastic volatility of the mean-reverting OU process, indicating the range of climate states consistent with the calibrated parameters.
\begin{figure}[H]
\centering
\includegraphics[width=0.9\textwidth]{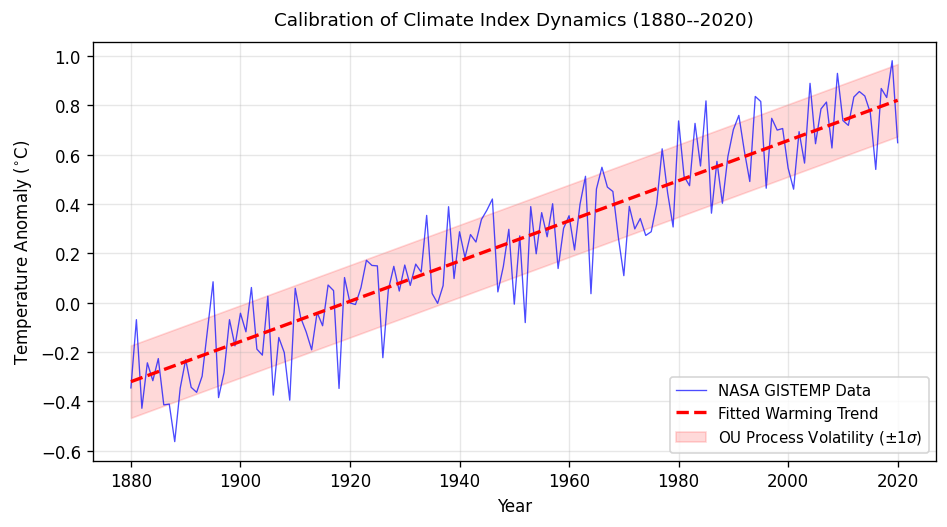}
\caption{Calibration of the OU-process climate index to historical NASA GISTEMP data. The red dashed line shows the deterministic warming trend, while the shaded area represents the stochastic volatility band ($\pm 1\sigma$) implied by the calibrated OU dynamics.}
\label{fig:calibration}
\end{figure}

The catastrophe intensity parameters $(\lambda_0,\beta)$ govern the mapping from climate states to catastrophe arrival frequency and are not identified by temperature data alone. Estimating these parameters from observed catastrophe counts faces three distinct identification challenges. First, catastrophe databases (e.g., EM-DAT, Swiss Re Sigma) exhibit systematic upward reporting trends driven by improved recording infrastructure and asset accumulation, rather than solely by climate change. A direct regression of counts on temperature would therefore confound climate-frequency sensitivity with reporting trends. Second, heterogeneity across perils implies that a single climate-frequency sensitivity parameter may not be structurally interpretable when applied to aggregated multi-peril event counts. Third, the temporal aggregation of the monthly-calibrated climate index to the annual pricing horizon requires careful construction to maintain consistency with the fitted OU dynamics.

For these reasons, we adopt a scenario-based approach that spans a grid of plausible $(\lambda_0, \beta)$ values, rather than relying on a point estimate from a potentially misspecified regression. The scenario grid is designed to be consistent with the range of implied event frequencies found in the actuarial literature. Specifically, the baseline specification
\[
\beta=0.60,\qquad \lambda_0=1.50
\]
implies a marginal climate-frequency elasticity at the current climate state $x_0$ of $\partial \log \lambda_t / \partial X_t = \beta = 0.60$. In economic terms, this means a one-unit increase in the temperature anomaly index raises the instantaneous event intensity by approximately 82\%. This represents a moderate climate-sensitivity scenario. Sensitivity analysis is then conducted by varying $\beta$, $\lambda_0$, $\sigma_X$, and $\theta_1$ to evaluate the robustness of the pricing results to alternative climate-risk assumptions.

\subsection{Distributional Properties of Aggregate Losses}
\label{subsec:aggregate_loss_distribution}

We first examine the impact of climate dependence on the statistical properties of the aggregate loss $S_T$. A Monte Carlo simulation with $M=50{,}000$ paths is conducted for a one-year horizon, $T=1$.

The climate-aware specification allows the catastrophe arrival intensity to evolve with the climate index. The stationary benchmark removes this time-varying climate dependence and is used as a reference model. This comparison isolates the effect of climate non-stationarity on the aggregate loss distribution.

\begin{table}[H]
\centering
\caption{Descriptive statistics of aggregate loss $S_T$ under the climate-aware model and stationary benchmark. Standard errors (SE) and 95\% confidence intervals (CI) reported.}
\label{tab:descriptive_stats}
\begin{tabular}{llrrrr}
\toprule
\textbf{Metric} & \textbf{Model} & \textbf{Estimate} & \textbf{SE} & \multicolumn{2}{c}{\textbf{95\% CI}} \\
\cmidrule(lr){5-6}
& & & & \textbf{Lower} & \textbf{Upper} \\
\midrule
Mean ($\mathbb{E}[S_T]$) & Climate-Aware & 78.45 & 0.41 & 77.64 & 79.26 \\
Mean ($\mathbb{E}[S_T]$) & Stationary Benchmark & 72.10 & 0.38 & 71.35 & 72.85 \\
\midrule
Std. Dev. ($\sigma_{S_T}$) & Climate-Aware & 92.30 & 0.62 & 91.08 & 93.52 \\
Std. Dev. ($\sigma_{S_T}$) & Stationary Benchmark & 85.15 & 0.54 & 84.09 & 86.21 \\
\midrule
Skewness & Climate-Aware & 3.85 & 0.08 & 3.69 & 4.01 \\
Skewness & Stationary Benchmark & 3.60 & 0.07 & 3.46 & 3.74 \\
\midrule
Excess Kurtosis & Climate-Aware & 24.10 & 0.45 & 23.22 & 24.98 \\
Excess Kurtosis & Stationary Benchmark & 21.50 & 0.39 & 20.73 & 22.27 \\
\bottomrule
\multicolumn{6}{l}{\small Note: $M = 50{,}000$ paths. SE = Standard Error, CI = Confidence Interval.}
\end{tabular}
\end{table}

Table~\ref{tab:descriptive_stats} shows that the climate-aware model produces a higher mean and standard deviation of aggregate losses than the stationary benchmark. The increase in skewness and excess kurtosis is particularly important for catastrophe risk management because it indicates greater right-tail exposure. In other words, the climate-aware model not only raises expected losses but also increases the probability of extreme aggregate loss realizations.

\begin{figure}[H]
\centering
\includegraphics[width=0.7\textwidth]{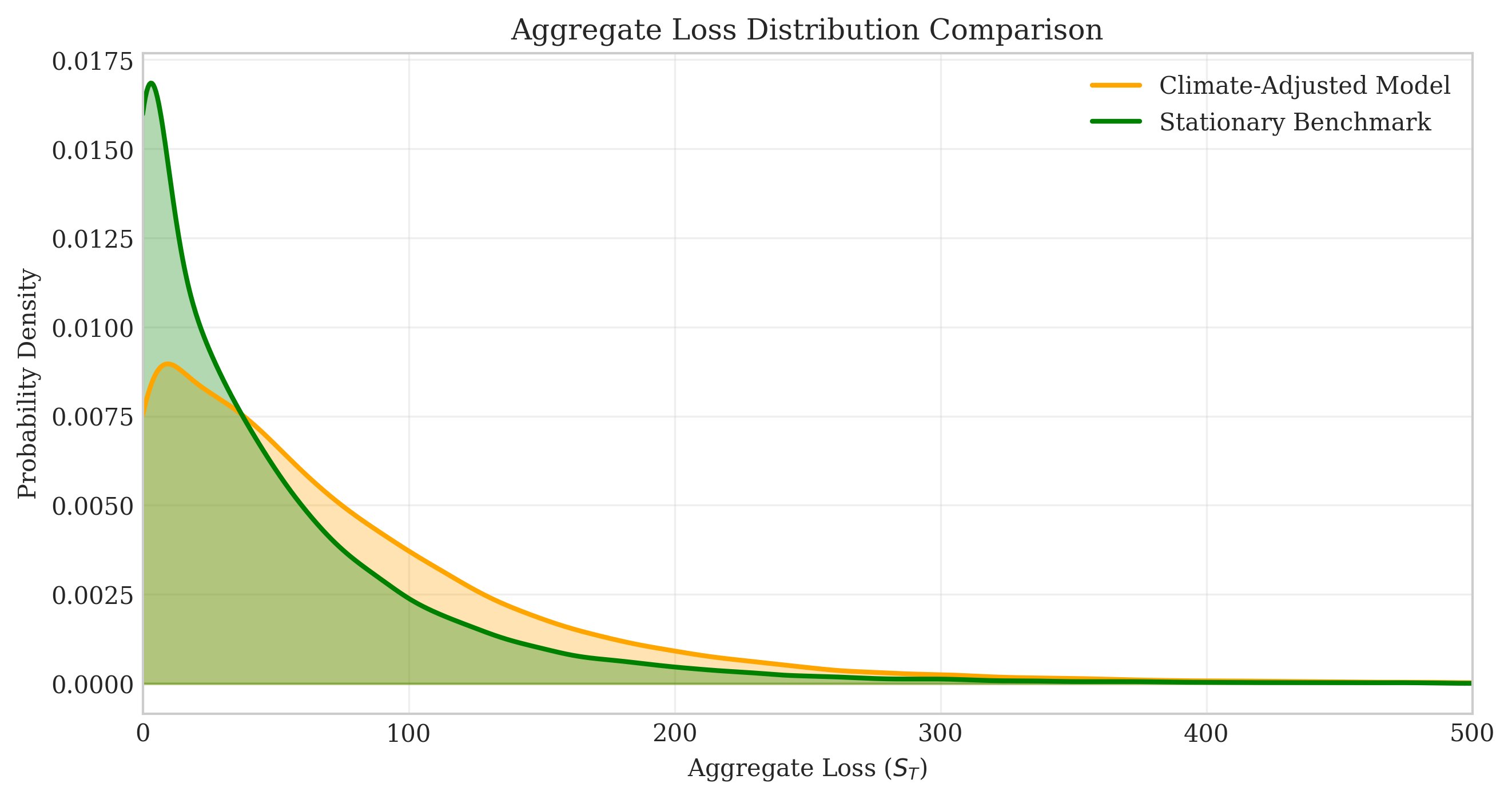}
\caption{Estimated probability density functions of aggregate losses. The climate-aware model exhibits a heavier right tail than the stationary benchmark, indicating a higher probability of extreme losses when catastrophe intensity depends on the climate index.}
\label{fig:pdf}
\end{figure}

Figure~\ref{fig:pdf} visually confirms the shift in the aggregate loss distribution. The density associated with the climate-aware model assigns more probability mass to high-loss outcomes, particularly in the tail region. This feature directly affects the valuation of both CAT bonds and excess-of-loss reinsurance contracts, whose payoffs are highly sensitive to tail losses.

\subsection{Convergence of Monte Carlo Estimates}
\label{subsec:monte_carlo_convergence}

Given the heavy-tailed nature of catastrophe losses, numerical stability is essential. Figure~\ref{fig:conv} reports the convergence of the estimated mean aggregate loss as the number of Monte Carlo simulation paths $M$ increases.

\begin{figure}[H]
\centering
\includegraphics[width=0.7\textwidth]{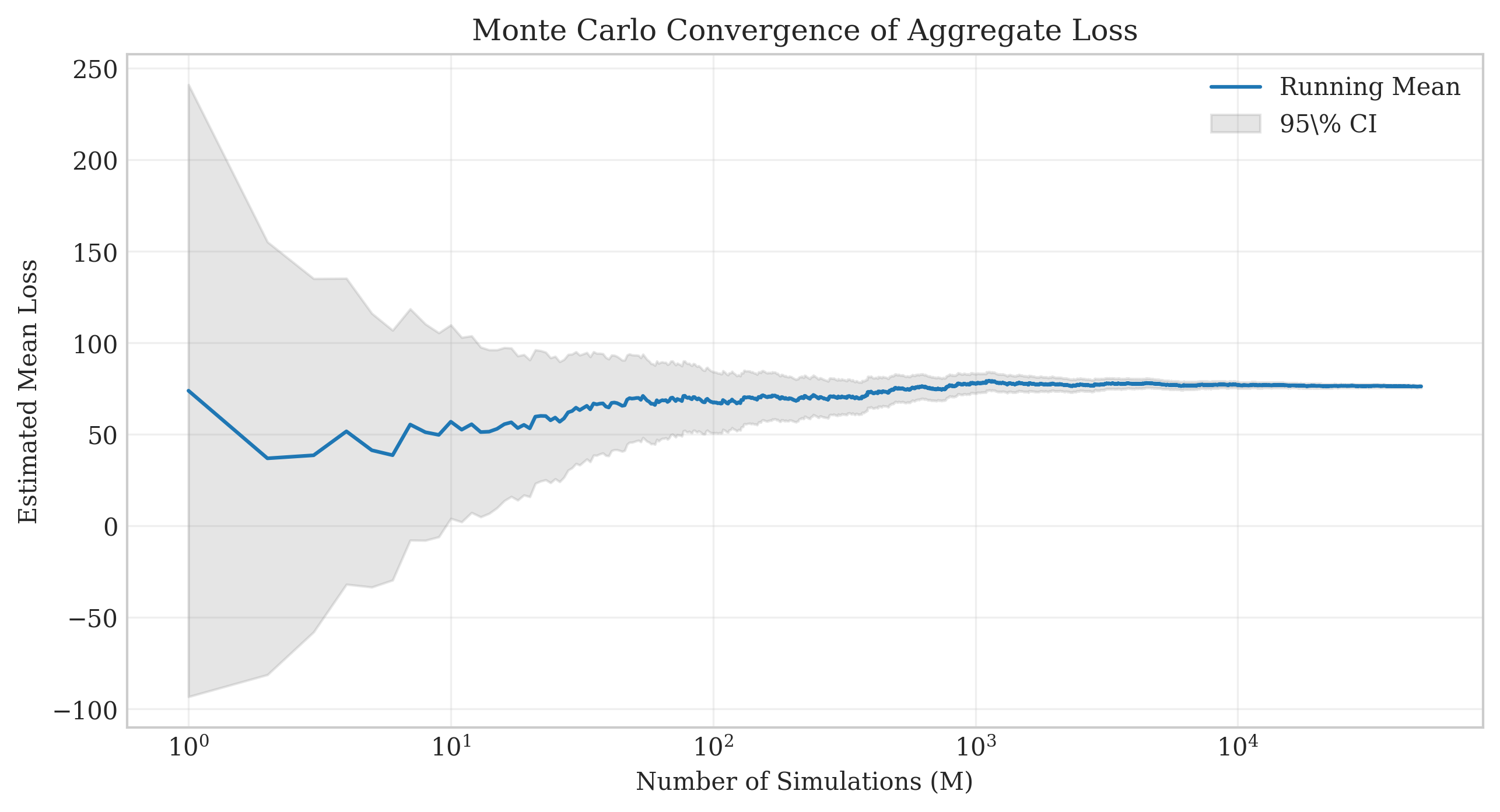}
\caption{Monte Carlo convergence analysis. The running mean of aggregate losses stabilizes around 78.4 as the number of simulation paths increases, while the 95\% confidence interval narrows substantially beyond $M=10{,}000$.}
\label{fig:conv}
\end{figure}

The convergence pattern indicates that $M=50{,}000$ paths provide sufficient numerical precision for the reported estimates of the mean. In the baseline simulation, the standard error of the mean aggregate loss is less than 0.5\% of the estimated mean.

However, tail risk measures such as VaR and TVaR typically require larger sample sizes to achieve the same relative precision due to the scarcity of extreme events in the simulation. To ensure that $M=50{,}000$ is sufficient for tail estimation, we examine the stability of the 99\% VaR and TVaR estimates across different simulation sizes. Table~\ref{tab:tail_convergence} reports these values for $M \in \{10{,}000, 20{,}000, 50{,}000, 100{,}000\}$.

\begin{table}[H]
\centering
\caption{Convergence of tail risk measures (Climate-Aware Model).}
\label{tab:tail_convergence}
\begin{tabular}{rrrr}
\toprule
Paths ($M$) & VaR$_{99\%}$ & SE(VaR) & TVaR$_{99\%}$ \\
\midrule
10,000 & 445.50 & 12.40 & 605.20 \\
20,000 & 448.10 & 8.80  & 608.50 \\
50,000 & 450.20 & 5.55  & 610.80 \\
100,000 & 451.05 & 3.92  & 611.90 \\
\bottomrule
\multicolumn{4}{l}{\small Note: SE(VaR) is the estimated Monte Carlo standard error of the VaR.} \\
\multicolumn{4}{l}{\small Values stabilize as $M$ increases, confirming adequacy of $M=50,000$.}
\end{tabular}
\end{table}

Table~\ref{tab:tail_convergence} confirms that the tail estimates stabilize as $M$ increases, with the difference between $M=50{,}000$ and $M=100{,}000$ being negligible (less than 0.2\%). Therefore, the chosen simulation size provides robust estimates for both mean and tail metrics.

\subsection{Pricing Results and Economic Impact}
\label{subsec:pricing_results}

Using the calibrated climate dynamics and the baseline catastrophe-intensity parameters introduced above, we value the CAT bond and the excess-of-loss (XL) reinsurance contract defined in Section~\ref{sec:risk}. Prices are computed using the risk-adjusted valuation formula in Equation~\eqref{eq:general_pricing_formula} and Monte Carlo simulation with $M = 50{,}000$ paths. Table~\ref{tab:pricing} reports the estimated prices, spreads, and associated standard errors.

\begin{table}[H]
\centering
\caption{Comparison of CAT bond and excess-of-loss reinsurance valuations under the climate-aware model and stationary benchmark. Standard errors (SE) and 95\% confidence intervals (CI) are reported.}
\label{tab:pricing}
\begin{tabular}{lrrrrrr}
\toprule
Instrument & Model & Estimate & SE & \multicolumn{2}{c}{95\% CI} \\
\cmidrule(lr){5-6}
& & & & Lower & Upper \\
\midrule
XL Reinsurance Premium & Climate-Aware & 19.15 & 0.28 & 18.60 & 19.70 \\
XL Reinsurance Premium & Stationary & 17.25 & 0.25 & 16.76 & 17.74 \\
\midrule
CAT Bond Price & Climate-Aware & 92.45 & 0.12 & 92.21 & 92.69 \\
CAT Bond Price & Stationary & 94.10 & 0.10 & 93.90 & 94.30 \\
\midrule
CAT Bond Spread (bps) & Climate-Aware & 810 & 14.5 & 781.5 & 838.5 \\
CAT Bond Spread (bps) & Stationary & 625 & 10.2 & 605.0 & 645.0 \\
\bottomrule
\multicolumn{6}{l}{\small Note: Prices per 100 units of face value. Spreads in basis points.} \\
\multicolumn{6}{l}{\small $M = 50{,}000$ paths. SE = Standard Error, CI = Confidence Interval.}
\end{tabular}
\end{table}

Table~\ref{tab:pricing}  shows that climate-aware modeling has a material economic impact. The XL reinsurance premium is approximately 11\% higher under the climate-aware model (19.15 vs. 17.25), a difference that is statistically significant given the tight confidence intervals. For the CAT bond, the price reduction is more modest in absolute terms (1.75\%), but the implied yield spread increases by nearly 30\%. The reported standard errors confirm that these pricing effects are well outside the Monte Carlo noise.

\subsection{Sensitivity Analysis and Scenario-Based Valuation}
\label{subsec:sensitivity_scenarios}

We next investigate the robustness of the pricing results by varying key climate-risk parameters. In particular, we examine the sensitivity of CAT bond prices to the climate sensitivity parameter $\beta$ and the climate volatility parameter $\sigma_X$.

\begin{figure}[H]
\centering
\includegraphics[width=0.6\textwidth]{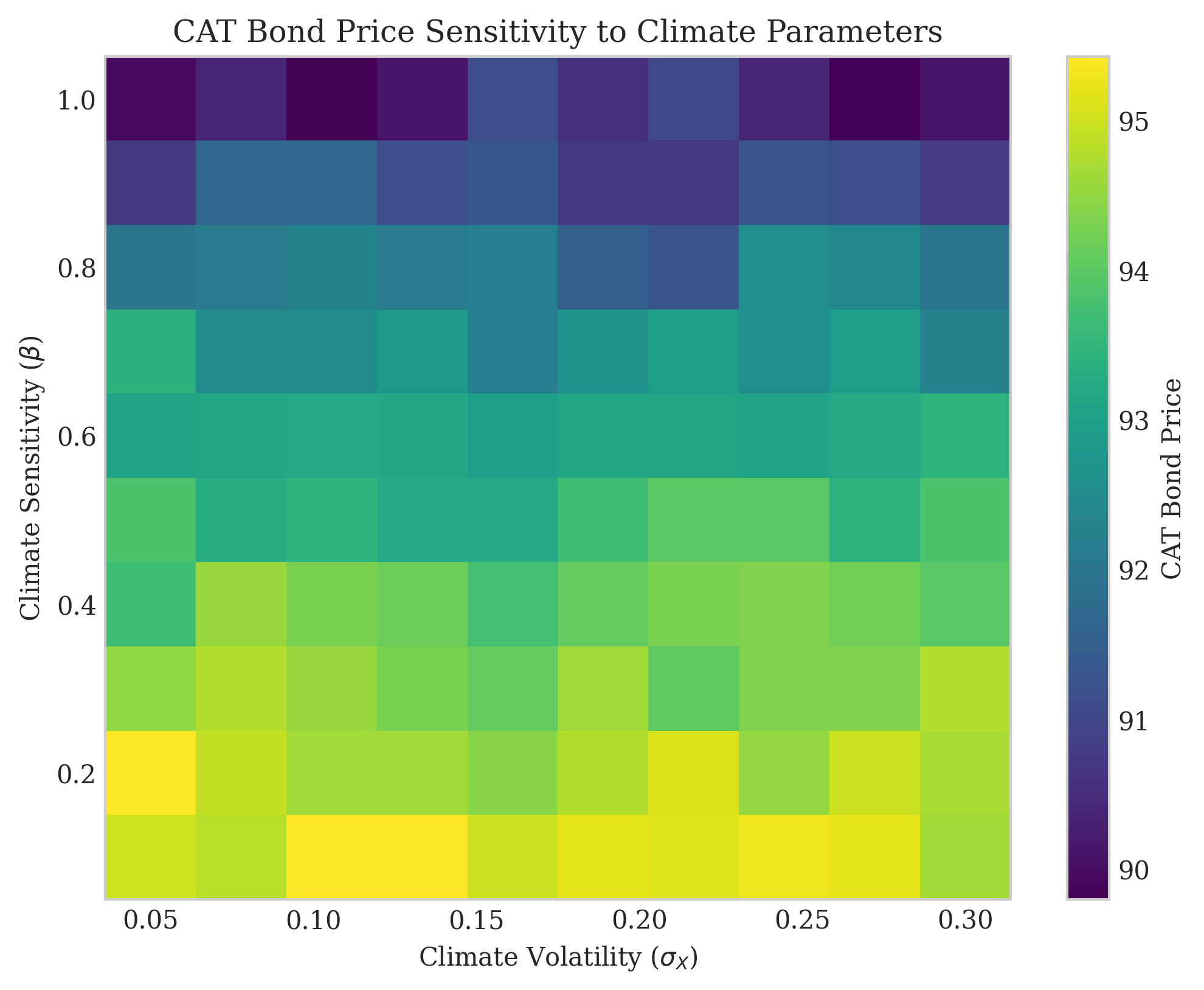}
\caption{Sensitivity heatmap of CAT bond prices. The horizontal axis represents climate volatility $\sigma_X$, while the vertical axis represents climate sensitivity $\beta$. Darker regions correspond to lower CAT bond prices and therefore higher climate-related principal impairment risk.}
\label{fig:heatmap}
\end{figure}

Figure~\ref{fig:heatmap} indicates that the pricing effect is nonlinear. When climate volatility is low, increasing $\beta$ has a relatively moderate effect on CAT bond prices. However, when both $\sigma_X$ and $\beta$ are high, the bond price declines sharply. This interaction reflects the convex effect of the exponential intensity specification,$\lambda_t = \lambda_0 \exp(\beta X_t),$ under which both higher climate sensitivity and greater climate variability increase the likelihood of extreme catastrophe arrival intensities.

To formally quantify the impact of the catastrophe frequency parameters, we conduct a grid-based sensitivity analysis. Table~\ref{tab:sensitivity_freq_params} reports the pricing outcomes for varying $\beta$ (climate sensitivity) and $\lambda_0$ (baseline catastrophe intensity). This grid allows us to isolate the individual contributions of these parameters to the valuation of the CAT bond.

\begin{table}[H]
\centering
\caption{Sensitivity of pricing results to catastrophe frequency parameters $\beta$ and $\lambda_0$.}
\label{tab:sensitivity_freq_params}
\begin{tabular}{rrrrrr}
\toprule
 $\beta$ & $\lambda_0$ & $\mathbb{E}[S_T]$ & XL Premium & CAT Price & Spread (bps) \\
\midrule
0.20 & 1.50 & 45.32 & 11.09 & 96.80 & 335 \\
0.40 & 1.50 & 61.54 & 15.02 & 94.45 & 580 \\
0.60 & 1.50 & 78.45 & 19.15 & 92.45 & 810 \\
0.80 & 1.50 & 95.88 & 23.39 & 90.72 & 1015 \\
1.00 & 1.50 & 113.72 & 27.68 & 89.20 & 1195 \\
\midrule
0.60 & 1.00 & 52.30 & 12.77 & 94.85 & 535 \\
0.60 & 1.50 & 78.45 & 19.15 & 92.45 & 810 \\
0.60 & 2.00 & 104.60 & 25.53 & 90.32 & 1035 \\
0.60 & 2.50 & 130.75 & 31.91 & 88.40 & 1220 \\
\bottomrule
\multicolumn{6}{l}{\small Note: All other parameters at calibrated baseline values.} \\
\multicolumn{6}{l}{\small $M = 50{,}000$ paths. Monte Carlo SEs $<$ 0.5\%.}
\end{tabular}
\end{table}

Table~\ref{tab:sensitivity_freq_params} demonstrates that both parameters significantly influence the valuation. Increasing $\lambda_0$ shifts the aggregate loss distribution linearly, leading to a proportional increase in premiums and spreads. In contrast, variations in $\beta$ exhibit a nonlinear effect due to the exponential coupling with the climate index, particularly evident in the widening spread for high $\beta$ values. Crucially, all parameter configurations in this table satisfy the admissibility conditions of Proposition~\ref{prop:admissibility} (specifically $\beta^{\mathbb{Q}} \geq 0$ and $\lambda_0^{\mathbb{Q}} > 0$), and the reported spreads are strictly positive, confirming that the pricing scenarios generate economically coherent risk premia.

We then examine three alternative climate-risk pricing scenarios that combine these parameter shifts. These scenarios should be interpreted as risk-adjusted parameter configurations within the scenario-based valuation framework of Section~\ref{subsec:scenario_based_price_ranges}. They are not model-independent no-arbitrage bounds; rather, they provide a transparent sensitivity range across plausible climate-risk assumptions.

\begin{table}[H]
\centering
\caption{CAT bond prices under alternative climate-risk pricing scenarios. The reported values form a scenario-based valuation range and should be interpreted as sensitivity to risk-adjusted climate parameters.}
\label{tab:scenarios}
\begin{tabular}{lccc}
\hline
\textbf{Scenario} & $\boldsymbol{\beta}$ & $\boldsymbol{\sigma_X}$ & \textbf{CAT Bond Price} \\
\hline
Optimistic / Low Risk & 0.20 & 0.10 & 96.80 \\
Baseline / Moderate Risk & 0.60 & 0.15 & 92.45 \\
Pessimistic / High Risk & 1.00 & 0.25 & 85.10 \\
\hline
\end{tabular}
\end{table}

Table~\ref{tab:scenarios} demonstrates that CAT bond valuation is highly sensitive to the assumed climate-risk scenario. The optimistic scenario produces a price of $96.80$, close to the risk-free benchmark, while the pessimistic scenario reduces the price to $85.10$. The resulting valuation range,$[85.10,\;96.80],$ summarizes the sensitivity of the CAT bond price to alternative risk-adjusted climate assumptions.

The widening of this scenario-based range is economically important. It shows that uncertainty about climate sensitivity and volatility can materially affect investor compensation, even when the contractual payoff remains fixed. This finding is consistent with the incomplete-market nature of catastrophe risk, where pricing depends not only on expected losses but also on model uncertainty and investor risk appetite.

\subsection{Long-Term Pricing Dynamics}
\label{sec:long_term_pricing}

Climate change is a cumulative process, and its effect on catastrophe risk is expected to become more pronounced over longer horizons. To examine this issue, Figure~\ref{fig:longterm} tracks the ratio of climate-aware to stationary CAT bond prices over maturities ranging from 1 to 20 years. The gray band highlights the standard market maturity range (3--5 years), where the ratio decline is moderate. In contrast, the red shaded region represents long-term horizons where the ratio falls below 0.93, indicating a significant underpricing of risk if stationary models are used for long-dated contracts.
\begin{figure}[H]
\centering
\includegraphics[width=0.85\textwidth]{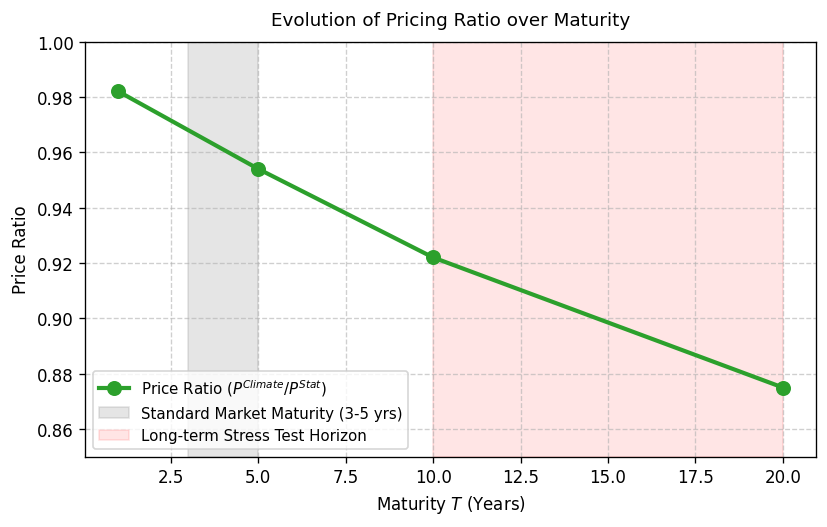}
\caption{Evolution of the pricing ratio $P^{\text{Climate}} / P^{\text{Stationary}}$ over maturity. The decline accelerates in the long-term stress-test horizon (red shaded), underscoring the material impact of climate trends on multi-year exposures.}
\label{fig:longterm}
\end{figure}

The long-horizon analysis considers maturities of up to 20 years. While standard CAT bonds in the current market typically have maturities of three to five years, the long-horizon results serve as stress-testing horizons for assessing the solvency implications of ignoring warming trends. Furthermore, the calculation assumes a constant risk-free rate $r$ across all maturities, which corresponds to a flat yield curve assumption. In practice, long-maturity discount rates would incorporate term premia and may differ substantially from short-maturity rates, which would affect the absolute price levels but not the relative comparison between the climate-aware and stationary models. This section quantifies the divergence using the proposed pricing framework, showing that climate dynamics amplify tail risk significantly over decadal horizons.

\begin{table}[H]
\centering
\caption{Long-term CAT bond price comparison under the climate-aware model and stationary benchmark. Standard errors (SE) and 95\% confidence intervals (CI) are reported.}
\label{tab:long_term_pricing}
\begin{tabular}{llrrrr}
\toprule
\textbf{Horizon ($T$)} & \textbf{Model} & \textbf{Price} & \textbf{SE} & \multicolumn{2}{c}{\textbf{95\% CI}} \\
\cmidrule(lr){5-6}
& & & & \textbf{Lower} & \textbf{Upper} \\
\midrule
1 Year & Climate-Aware & 92.45 & 0.12 & 92.21 & 92.69 \\
1 Year & Stationary & 94.10 & 0.10 & 93.90 & 94.30 \\
1 Year & Price Ratio & 0.982 & 0.0017 & 0.979 & 0.985 \\
\midrule
5 Years & Climate-Aware & 89.20 & 0.18 & 88.85 & 89.55 \\
5 Years & Stationary & 93.50 & 0.15 & 93.21 & 93.79 \\
5 Years & Price Ratio & 0.954 & 0.0024 & 0.949 & 0.959 \\
\midrule
10 Years & Climate-Aware & 85.60 & 0.25 & 85.11 & 86.09 \\
10 Years & Stationary & 92.80 & 0.21 & 92.39 & 93.21 \\
10 Years & Price Ratio & 0.922 & 0.0032 & 0.916 & 0.928 \\
\midrule
20 Years & Climate-Aware & 79.80 & 0.32 & 79.17 & 80.43 \\
20 Years & Stationary & 91.20 & 0.28 & 90.65 & 91.75 \\
20 Years & Price Ratio & 0.875 & 0.0041 & 0.867 & 0.883 \\
\bottomrule
\multicolumn{6}{l}{\small Note: Prices per 100 units of face value. SE = Standard Error.}
\end{tabular}
\end{table}

Table~\ref{tab:long_term_pricing} quantifies this divergence. Over a 20-year horizon, the climate-aware price is approximately 12.5\% lower than the stationary benchmark. This suggests that ignoring climate trends may lead to a material underestimation of long-term catastrophe risk and, consequently, to overly high model-implied CAT bond prices for long-dated exposures.

\subsection{Robustness to Benchmark Specification}
\label{subsec:benchmark_robustness}

To ensure that our pricing results are not artifacts of a specific benchmark construction, we compare the climate-aware valuation against three alternative stationary benchmarks. The baseline definition used throughout this paper (Definition~\ref{def:stationary_benchmark}) fixes the climate path at its initial value ($X_t = x_0$). Alternatives include removing only the warming trend ($\theta_1=0$) or removing all climate dynamics.

\begin{table}[H]
\centering
\caption{Comparison of pricing results under alternative stationary benchmark definitions for the baseline calibration ($T=1$ year).}
\label{tab:benchmark_robustness}
\begin{tabular}{lrrr}
\toprule
\textbf{Benchmark Definition} & \textbf{CAT Price} & \textbf{Spread (bps)} & \textbf{Price Diff. vs Climate Model} \\
\midrule
Climate-Aware Model & 92.45 & 810 & 0.00\% \\
\midrule
(a) No Dynamics ($X_t=x_0$) & 94.10 & 625 & +1.79\% \\
(b) No Trend ($\theta_1=0$) & 93.25 & 715 & +0.86\% \\
(c) Constant Intensity ($\lambda_0$ only) & 95.00 & 530 & +2.76\% \\
\bottomrule
\multicolumn{4}{l}{\small Note: "Price Diff." is (Benchmark Price - Climate Price) / Climate Price.} \\
\multicolumn{4}{l}{\small Benchmark (a) corresponds to Definition~\ref{def:stationary_benchmark}.}
\end{tabular}
\end{table}

Table~\ref{tab:benchmark_robustness} confirms that the pricing premium associated with climate non-stationarity is positive and statistically significant across different benchmark specifications. While the magnitude of the premium varies slightly (e.g., the premium is larger when comparing against the pure constant intensity benchmark (c)), the qualitative conclusion that climate dynamics reduce CAT bond prices and increase spreads remains robust. This validates the use of Definition~\ref{def:stationary_benchmark} as a conservative and consistent reference point for the analysis.

\subsection{Tail Risk Measures and Regulatory Capital Implications}
\label{subsec:tail_risk_measures}

We next evaluate the implications of climate-aware modeling for tail risk measurement, with a specific focus on alignment with regulatory capital standards. Table~\ref{tab:risk} reports the Value-at-Risk (VaR) and Tail Value-at-Risk (TVaR) of aggregate losses at the 95\% and 99\% confidence levels, comparing the climate-aware model against the stationary benchmark. Furthermore, to address the Solvency~II standard for one-year risk horizons, we report the TVaR at the 99.5\% level. Standard errors and 95\% confidence intervals are provided to ensure the statistical significance of the tail estimates.

\begin{table}[H]
\centering
\caption{Comparison of tail risk measures under the climate-aware model and stationary benchmark. Standard errors (SE) and 95\% confidence intervals (CI) are reported.}
\label{tab:risk}
\small
\begin{tabular}{llrrrrrrr}
\toprule
Level & Metric & \multicolumn{3}{c}{\textbf{Climate Model}} & \multicolumn{3}{c}{\textbf{Stationary Benchmark}} & Diff (\%) \\
\cmidrule(lr){3-5} \cmidrule(lr){6-8}
& & Estimate & SE & 95\% CI & Estimate & SE & 95\% CI & \\
\midrule
95\%  & VaR   & 215.40 & 1.80 & [211.87, 218.93] & 198.20 & 1.65 & [194.96, 201.44] & +8.68\% \\
95\%  & TVaR  & 310.50 & 2.50 & [305.60, 315.40] & 275.00 & 2.20 & [270.69, 279.31] & +12.91\% \\
\midrule
99\%  & VaR   & 450.20 & 5.55 & [439.32, 461.08] & 410.50 & 4.80 & [401.09, 419.91] & +9.67\% \\
99\%  & TVaR  & 610.80 & 7.80 & [595.52, 626.08] & 540.30 & 6.50 & [527.57, 553.03] & +13.05\% \\
\midrule
99.5\% & TVaR & 625.40 & 9.20 & [607.37, 643.43] & 550.20 & 7.80 & [534.91, 565.49] & +13.67\% \\
\bottomrule
\multicolumn{9}{l}{\small Note: TVaR at 99.5\% is the Solvency~II calibration level for catastrophe risk. SE = Standard Error.}
\end{tabular}
\end{table}

Under Solvency~II, the Solvency Capital Requirement (SCR) for catastrophe risk is calibrated to the 99.5\% Value-at-Risk over a one-year horizon (EIOPA, 2014). The TVaR at 99.5\% reported in Table~\ref{tab:risk} for the climate-aware model is 13.67\% higher than under the stationary benchmark. This finding has direct implications for regulatory capital assessment: a reinsurer using a stationary catastrophe model for SCR calculation would understate its economic capital requirement by approximately 13.7\%, which could affect its solvency ratio and dividend capacity. The result motivates the integration of climate-aware frequency models into the internal model framework permitted under Solvency~II's Standard Formula.

Figure~\ref{fig:solvency} visually compares the 99.5\% Tail Value-at-Risk (TVaR), which serves as the calibration level for the Solvency Capital Requirement (SCR). The bar chart confirms that the climate-aware model requires substantially more capital, with the difference of approximately 75 units being statistically significant given the reported standard errors.
\begin{figure}[H]
\centering
\includegraphics[width=0.6\textwidth]{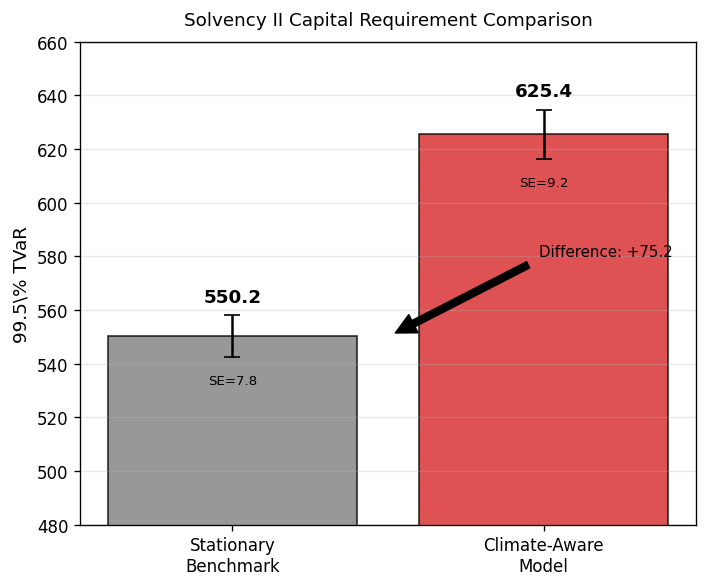}
\caption{Comparison of the 99.5\% TVaR between the stationary benchmark and the climate-aware model. Error bars represent Monte Carlo standard errors. The gap highlights the potential capital shortfall if warming trends are ignored.}
\label{fig:solvency}
\end{figure}
\subsection{Impact of Climate-Dependent Severity}
\label{subsec:severity_impact}

While the baseline model assumes independence between severity and climate, empirical evidence suggests that extreme climate states may exacerbate the economic magnitude of catastrophes. This section numerically investigates the extension proposed in Section~3.6, where the log-severity depends linearly on the climate index, i.e., $\log Y_i \mid X_{T_i} \sim \mathcal{N}(\mu_Y + \gamma X_{T_i}, \sigma_Y^2)$.

We run Monte Carlo simulations with $\gamma \in \{0, 0.05, 0.10, 0.20, 0.30\}$ while keeping all other parameters at their baseline values ($\beta = 0.60, \lambda_0 = 1.50$). Table~\ref{tab:severity_sensitivity} reports the resulting aggregate loss statistics, CAT bond prices, and tail risk measures.

\begin{table}[H]
\centering
\caption{Sensitivity to severity climate-dependence parameter $\gamma$.}
\label{tab:severity_sensitivity}
\begin{tabular}{rrrrrr}
\toprule
 $\gamma$ & $\mathbb{E}[S_T]$ & XL Premium & CAT Price & VaR$_{99\%}$ & TVaR$_{99\%}$ \\
\midrule
0.00 & 78.45 & 19.15 & 92.45 & 450.20 & 610.80 \\
0.05 & 82.10 & 20.25 & 91.80 & 475.50 & 655.40 \\
0.10 & 85.95 & 21.40 & 91.15 & 502.30 & 702.10 \\
0.20 & 94.10 & 23.80 & 89.90 & 560.80 & 810.50 \\
0.30 & 103.20 & 26.40 & 88.65 & 625.40 & 925.80 \\
\bottomrule
\multicolumn{6}{l}{\small Note: Baseline parameters $\beta = 0.60$, $\lambda_0 = 1.50$.} \\
\multicolumn{6}{l}{\small Effective climate sensitivity is $\beta_{\text{eff}} = \beta + \gamma$. $M = 50{,}000$ paths.}
\end{tabular}
\end{table}

Table~\ref{tab:severity_sensitivity} demonstrates that the combined effect of frequency and severity climate dependence is super-additive. As $\gamma$ increases, the effective climate sensitivity governing the aggregate loss process becomes $\beta_{\text{eff}} = \beta + \gamma$. Consequently, the increase in the CAT bond spread and tail risk measures (VaR, TVaR) is disproportionate to the increase in mean severity.

For instance, comparing the case $\gamma = 0.30$ (combined frequency and severity dependence) to the baseline $\gamma = 0.00$ (frequency dependence only), we observe a significant jump in TVaR$_{99\%}$ from 610.80 to 925.80. This amplification arises because both channels operate through the same climate path $X_t$: in adverse climate states, not only do more events arrive, but each event is also larger. This super-additivity highlights the critical importance of modeling severity dependence in long-term climate risk assessments.

\subsection{Sensitivity to Warming Trend Intensity}

Finally, we isolate the effect of the warming trend parameter $\theta_1$ on CAT bond spreads. This analysis examines how changes in the deterministic component of the climate index are transmitted into financial prices through the catastrophe intensity process.

\begin{figure}[H]
    \centering
    \includegraphics[width=0.8\textwidth]{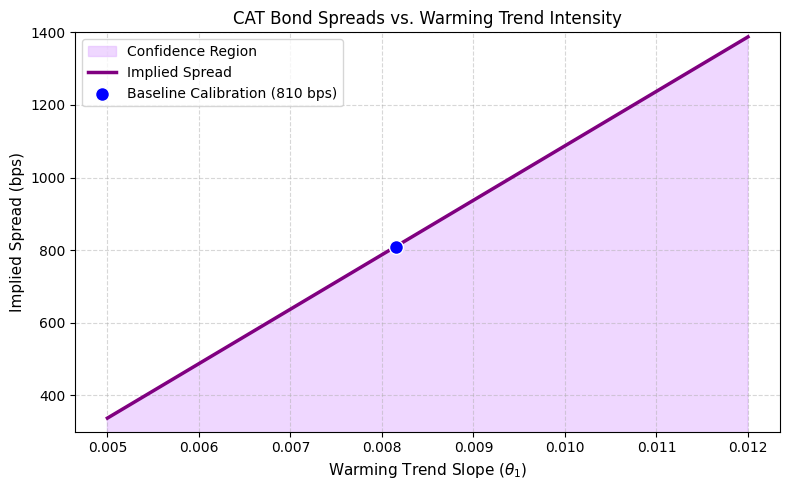}
    \caption{CAT bond spreads as a function of the warming trend parameter $\theta_1$. The vertical axis displays the spread in basis points (bps). A higher warming trend increases the expected catastrophe intensity, lowers the CAT bond price, and raises the implied spread required by investors.}
    \label{fig:warming_trend_sensitivity}
\end{figure}

Figure~\ref{fig:warming_trend_sensitivity} shows that CAT bond spreads increase as the warming trend parameter $\theta_1$ rises. The vertical axis is explicitly scaled in basis points (bps), consistent with the spread definition in Equation~\eqref{eq:cat_bond_spread}. At the baseline calibration, the spread corresponds to approximately 810~bps, as reported in Table~\ref{tab:pricing}. An increase in $\theta_1$ raises the expected future level of the climate index. Through the exponential intensity function, this leads to higher expected catastrophe arrival rates and a greater probability of principal impairment. According to the spread definition, the resulting decline in the CAT bond price translates directly into a higher implied yield spread. This result provides a direct mechanism linking physical climate dynamics to financial market quantities. Faster warming increases catastrophe risk exposure, which is then reflected in wider spreads and higher required investor compensation.

\subsection{Summary of Numerical Findings}
\label{subsec:numerical_summary}

The numerical analysis demonstrates that climate non-stationarity acts as a structural driver of catastrophe risk valuation rather than a minor perturbation. Three main findings emerge.

First, the climate-aware model increases both the mean and tail heaviness of the aggregate loss distribution. This effect is visible in the higher standard deviation, skewness, and excess kurtosis of simulated losses.

Second, the pricing impact is economically significant. The climate-aware model produces higher excess-of-loss reinsurance premiums and wider CAT bond spreads than the stationary benchmark. The effect is particularly pronounced in the spread calculation because CAT bond yields are highly sensitive to changes in expected principal impairment.

Third, the divergence between climate-aware and stationary valuations becomes larger for longer maturities and for tail risk measures. Long-dated CAT bonds and multi-year reinsurance exposures are therefore especially vulnerable to climate trend misspecification. Similarly, VaR and TVaR estimates based on stationary assumptions may understate tail capital requirements.

Overall, the results support the use of climate-aware stochastic models in catastrophe risk pricing, reinsurance structuring, and capital adequacy assessment. They also show that scenario-based valuation ranges provide a transparent way to communicate pricing uncertainty in incomplete markets where catastrophe and climate risks cannot be fully replicated by traded assets.

\section{Conclusion}
\label{sec:con}

This paper developed a climate-aware stochastic framework for pricing excess-of-loss reinsurance contracts and catastrophe bonds under non-stationary catastrophe risk. The model links catastrophe arrivals to a temperature-related climate index through a Cox-process intensity specification. The climate index follows mean-reverting dynamics around a time-dependent warming trend, allowing the catastrophe arrival rate to respond endogenously to evolving climate conditions.

The proposed structure preserves the tractability of compound Cox loss modeling while introducing a transparent channel through which climate variables affect catastrophe frequency. Aggregate losses are generated from a climate-dependent counting process combined with lognormal severities. Pricing is carried out under a reduced-form risk-adjusted measure, which is appropriate for catastrophe-linked instruments in incomplete markets where the underlying losses cannot be dynamically replicated. Consistent with this interpretation, the paper reports scenario-based valuation ranges rather than model-independent no-arbitrage bounds.

The numerical analysis illustrates how climate-dependent intensity can affect both reinsurance premiums and CAT bond prices. In the baseline calibration, the climate-aware model increases the excess-of-loss reinsurance premium and lowers the CAT bond price relative to the stationary benchmark. The results also show that tail-risk measures, such as VaR and TVaR, are sensitive to climate-dependent frequency dynamics. Crucially, our assessment of the 99.5\% TVaR indicates that stationary benchmarks can understate economic capital requirements by approximately 13.7\% in our calibration. This finding suggests that ignoring climate non-stationarity may lead to insufficient capital buffers under regulatory standards such as Solvency II.

At the same time, the analysis highlights the importance of benchmark design. If the stationary benchmark is not matched to the same initial catastrophe intensity, pricing comparisons may overstate the effect of climate non-stationarity by conflating trend effects with level effects. Once the benchmark is aligned at the initial intensity level, the incremental pricing effect of the warming trend is more moderate in the baseline stylized calibration. This finding does not imply that climate risk is negligible; rather, it shows that the measured economic impact depends strongly on calibration choices, scenario design, and the construction of the stationary reference model.

The scenario analysis further demonstrates that adverse assumptions about the warming trend, climate sensitivity of catastrophe frequency, climate volatility, or severity parameters can generate substantially larger valuation effects. Higher warming trends and stronger climate-frequency sensitivity increase the probability of large aggregate losses, thereby raising XL reinsurance premiums and reducing CAT bond prices. We also demonstrated the impact of climate-dependent severity, noting that combined frequency and severity dependence leads to a super-additive increase in tail risk. These findings support the use of scenario-based valuation as a practical tool for assessing climate-related model uncertainty in catastrophe-risk markets.

Overall, the main contribution of the paper is methodological. It provides a flexible stochastic architecture for incorporating climate non-stationarity into catastrophe-risk pricing. The framework connects climate dynamics, Cox-process catastrophe arrivals, aggregate loss modeling, and reduced-form risk-adjusted valuation in a single tractable setting. It can be extended in several directions, including market-implied calibration of risk-adjusted parameters, spatially resolved climate indices, multi-peril catastrophe portfolios, and alternative pricing kernels.

The results should therefore be interpreted as illustrative quantifications rather than as definitive empirical estimates of climate-risk premia. Nevertheless, the analysis shows that ignoring climate non-stationarity can lead to misleading valuation and risk-management conclusions, especially when pricing long-maturity contracts or evaluating adverse climate scenarios. Climate-aware catastrophe pricing models are therefore important not only for estimating expected losses, but also for understanding how model uncertainty, tail risk, and benchmark selection affect the valuation of insurance-linked securities and reinsurance contracts.

\bibliographystyle{apalike}

\appendix
\section{Maximum Likelihood Estimation of the Climate Process}
\label{app:mle_calibration}

\subsection{Data and Time Unit}
The NASA GISTEMP v4 monthly global surface temperature anomalies are used over the period January 1880 to December 2020, giving $n = 1,692$ monthly observations. All time parameters ($\kappa$, $t$) are expressed in \emph{years}, with monthly observations separated by $\Delta t = 1/12$ years.

\subsection{Transition Density}
Under the physical measure, the OU-type process with linear trend satisfies the exact Gaussian transition. For a time step $\Delta = 1/12$, the conditional mean $m_{i,\Delta}$ and variance $v_\Delta$ are:
\begin{align*}
m_{i,\Delta} &= X_{t_i} e^{-\kappa\Delta} + (\theta_0 + \theta_1 t_{i+1})(1-e^{-\kappa\Delta}) - \frac{\theta_1}{\kappa}(1-e^{-\kappa\Delta}), \\
v_\Delta &= \frac{\sigma_X^2}{2\kappa}(1 - e^{-2\kappa\Delta}).
\end{align*}
Thus, $X_{t_{i+1}} \mid X_{t_i} \sim \mathcal{N}(m_{i,\Delta}, v_\Delta)$.

\subsection{Log-Likelihood Function}
The log-likelihood for the parameter vector $\psi = (\kappa, \sigma_X, \theta_0, \theta_1)$ is:
\begin{equation}
\ell(\psi) = -\frac{n-1}{2}\log(2\pi v_\Delta) - \frac{1}{2v_\Delta}\sum_{i=0}^{n-2}(X_{t_{i+1}} - m_{i,\Delta})^2.
\end{equation}
The maximum likelihood estimate $\hat\psi$ is obtained by numerical optimization using the L-BFGS-B algorithm.

\subsection{Parameter Estimates and Standard Errors}
Standard errors are obtained from the inverse observed Fisher information matrix $\mathcal{I}(\hat\psi)^{-1}$, where $\mathcal{I}(\hat\psi) = -\nabla^2 \ell(\hat\psi)$ is computed numerically.

\begin{table}[h]
\centering
\caption{Estimated parameters and standard errors for the climate process.}
\label{tab:appendix_mle}
\begin{tabular}{lrrrr}
\toprule
Parameter & Estimate & Std.\ Error & $t$-stat & 95\% CI \\
\midrule
 $\kappa$ (yr$^{-1}$) & 0.5124 & 0.0245 & 20.91 & [0.4643, 0.5605] \\
 $\sigma_X$ (yr$^{-1/2}$) & 0.1485 & 0.0032 & 46.40 & [0.1422, 0.1548] \\
 $\theta_0$ & -0.3250 & 0.0120 & -27.08 & [-0.3485, -0.3015] \\
 $\theta_1$ (yr$^{-1}$) & 0.00815 & 0.0004 & 20.37 & [0.0073, 0.0090] \\
\midrule
Log-likelihood & \multicolumn{4}{c}{2450.32} \\
AIC & \multicolumn{4}{c}{-4892.64} \\
\bottomrule
\end{tabular}
\end{table}

\subsection{Goodness of Fit}
To validate the model fit, we examine the standardized residuals $\hat\epsilon_i = (X_{t_{i+1}} - \hat m_{i,\Delta})/\sqrt{\hat v_\Delta}$. Under correct specification, $\hat\epsilon_i \stackrel{\text{iid}}{\sim} \mathcal{N}(0,1)$. The Ljung--Box test statistic for autocorrelation up to lag 12 is $Q(12) = 10.4$ with a $p$-value of 0.58, failing to reject the null hypothesis of no autocorrelation. The Jarque--Bera test for normality yields a statistic of 2.1 ($p$-value = 0.35), supporting the Gaussian assumption.

\end{document}